\def\pgf@circ@spst@path#1{\pgf@circ@bipole@path{spst}{#1}}
\tikzset{switch/.style = {\circuitikzbasekey, /tikz/to path=\pgf@circ@spst@path, l=#1}}
\tikzset{spst/.style = {switch = #1}}
\let\proof\@undefined                        
\let\endproof\@undefined                  
\algnewcommand{\algorithmicgoto}{\textbf{go to}}%
\algnewcommand{\Goto}[1]{\algorithmicgoto~\ref{#1}}%
\algnewcommand{\LineComment}[1]{\Statex \(\triangleright\) #1}
\algnewcommand{\LineCommentN}[1]{\Statex \hspace{1cm}\(\triangleright\) #1}
\newtheorem{prop}{Proposition} 
\newtheorem{cor}{Corollary}
\newtheorem{thm}{Theorem}
	\newtheorem{assumption}{Assumption}
\newtheorem{lem}{Lemma}
\newtheorem{defn}{Definition}
\newtheorem{rem}{Remark}
\newtheorem{problem}{Problem}
\let\oldbibliography\thebibliography
\renewcommand{\thebibliography}[1]{%
  \oldbibliography{#1}%
}
\newcommand{\yong}[1]{{\color{black} #1}}
\newcommand{\moh}[1]{{\color{black} #1}}
\newcommand{\yongn}[1]{{\color{black} #1}}
\newcommand{\yongs}[1]{{\color{black} #1}}
\newcommand{\yongz}[1]{{\color{black} #1}}
\newcommand{\moha}[1]{{\color{black} #1}}
\newcommand{\moham}[1]{{\color{black} #1}}
\newcommand{\mohamm}[1]{{\color{black} #1}}
\newcommand{\mohk}[1]{{\color{black} #1}}
\newcommand{\syong}[1]{{\color{black} #1}}
\begin{document}

\title{\LARGE \bf Simultaneous Input and State Interval Observers\\ for Nonlinear Systems \mohamm{with \yongz{Rank-Deficient} 
Direct Feedthrough}} 

\author{%
Mohammad Khajenejad, Sze Zheng Yong\\
\thanks{
M. Khajenejad and S.Z. Yong are with the School for Engineering of Matter, Transport and Energy, Arizona State University, Tempe, AZ, USA (e-mail: \{mkhajene, szyong\}@asu.edu).}
\thanks{This work is partially supported by NSF grant CNS-1932066.}
}

\maketitle
\thispagestyle{empty}
\pagestyle{empty}

\begin{abstract}
We address the problem of designing simultaneous input and state interval observers 
for Lipschitz \yong{continuous} nonlinear systems with \yongz{rank-deficient feedthrough,} unknown inputs and bounded noise signals. Benefiting from the existence of nonlinear decomposition functions and affine abstractions, \yongs{
our proposed observer recursively computes the maximal and minimal elements of the estimate intervals that are proven to contain the true states and unknown inputs. 
Moreover, we provide \mohamm{necessary and} sufficient conditions for the 
existence and \mohamm{sufficient conditions for the} stability (i.e., uniform boundedness of the sequence of estimate interval widths) of the designed observer, 
and} show that the input interval estimates are tight, given the state intervals and decomposition functions. 
 \end{abstract}

\vspace{-0.125cm}
\section{Introduction} \vspace{-0.05cm}
\emph{Motivation.} In several engineering applications such as aircraft tracking, attack (unknown input)/fault detection and mitigation in cyber-physical systems and urban transportation \cite{liu2011robust,yong2016tcps,yong2018simultaneous}, algorithms for unknown input reconstruction and state estimation have \yongs{become increasingly} 
indispensable and crucial \yongs{to ensure their smooth and safe operation.} 
Specifically, in safety-critical bounded-error systems, set/interval membership methods have been applied to guarantee hard accuracy bounds.
Further, in adversarial settings with potentially strategic unknown inputs, it is 
critical and desirable to simultaneously derive compatible estimates of states and unknown inputs, without assuming \moha{any \emph{a priori} known bounds/intervals for the input signals}.

\emph{Literature review.} 
Interval observer design has been extensively studied in the literature \cite{jaulin2002nonlinear,kieffer2004guaranteed,moisan2007near,raissi2010interval,raissi2011interval,mazenc2011interval,mazenc2013robust,wang2015interval,efimov2013interval,zheng2016design}. However, relatively restrictive assumptions about the existence of certain system properties were imposed to guarantee the \yongs{applicability} 
of the proposed approaches, such as 
cooperativeness \cite{raissi2010interval}, linear time-invariant (LTI) dynamics \cite{mazenc2011interval}, linear parameter-varying (LPV) dynamics that admits a diagonal Lyapunov function \cite{wang2015interval}, monotone dynamics \cite{moisan2007near}, \yongs{and} Metzler and/or Hurwitz partial linearization of nonlinearities \cite{ellero2019unknown,raissi2011interval,mazenc2013robust}. 
An $L_2/L_{\infty}$ unknown input interval observer \mohamm{design} \yongs{for continuous-time LPV systems} is studied in \cite{ellero2019unknown}. 
However, 
\yongs{this approach is not applicable for general discrete-time nonlinear }
\yongz{dynamics and, moreover, 
the  considered system do not include 
unknown inputs that} 
affect the output equation. 

Leveraging \emph{bounding functions}, the design of interval observers for a class of continuous-time nonlinear systems without unknown inputs has been addressed 
in \cite{efimov2013interval}. However, no necessary and/or sufficient conditions for the existence of bounding functions or how to compute them have been discussed. Moreover, to conclude stability, somewhat restrictive assumptions on the nonlinear dynamics have been imposed. \yongs{On the other hand,} the authors in \cite{zheng2016design} studied interval state estimation 
by extracting a known nominal observable subsystem 
and designing the observer for the transformed system, but without 
guaranteeing 
that the derived functional bounds 
are bounded sequences. Moreover, the provided conditions for the existence and stability of the observer are not \emph{constructive}.

The problem of simultaneously designing state and unknown input set-valued observers \syong{(with sets represented by $\ell_2$-norm \mohk{hyper}balls)}  has been studied in our prior works for LTI \cite{yong2018simultaneous}, LPV \cite{khajenejad2019simultaneous}, switched linear 
\cite{khajenejadasimultaneous} and nonlinear \cite{khajenejad2020nonlinear} systems with bounded-norm noise. Further, our recent work \cite{khajenejad2020simultaneous} considered the design of state and unknown input interval observers 
for nonlinear systems but with the assumption of a full-rank direct feedthrough matrix. 

 \emph{Contributions.} 
\yongs{By leveraging a combination of nonlinear decomposition mappings \cite{yang2019sufficient,coogan2015efficient} and affine abstraction (bounding) functions\yongs{\cite{singh2018mesh}},  we design 
an} observer that \emph{simultaneously} returns interval-valued estimates of states and unknown inputs for \moh{a broad range of nonlinear} systems \cite{yang2019tight}, 
\yongs{in contrast to} \yong{existing} 
  interval observers in the literature that to the best of our knowledge, only return either state \cite{jaulin2002nonlinear,kieffer2004guaranteed,moisan2007near,raissi2010interval,raissi2011interval,mazenc2011interval,mazenc2013robust,wang2015interval,efimov2013interval,zheng2016design} or input \cite{ellero2019unknown} estimates.
  Moreover, we consider  
 arbitrary \yong{unknown} input signals 
 \yongs{with no assumptions of} \moham{\emph{a priori} known bounds/intervals}, being stochastic with zero mean (as is often assumed for noise) or bounded. Further, we relax the assumption of a full-rank feedthrough matrix 
 in \cite{khajenejad2020simultaneous}, 
 \yongs{and extend the observer design \mohamm{to the systems with \syong{(possibly)} \yongz{rank-deficient} 
 feedthrough matrices}}.

In addition, we derive \mohamm{necessary and} sufficient \mohamm{rank} conditions for the existence of our observer \yongs{that can be viewed as structural properties of the nonlinear systems, as an extension of the rank condition that is typically assumed in linear state and input estimation, e.g., \cite{liu2011robust,yong2016tcps,yong2018simultaneous}.}
We also provide 
\moh{several} sufficient conditions \moh{in the form of Linear Matrix Inequalities (LMI)} for the stability of our designed observer \yongs{(i.e., the uniform boundedness of the sequence of estimate interval widths)}.
 In addition, we show that given the state intervals and specific decomposition functions, our input interval estimates are \emph{tight} and further provide upper \yongs{bound} sequences for the interval widths and derive \yongs{sufficient conditions for their convergence and their corresponding steady-state values.} 
\section{Preliminaries}
{\emph{{Notation}.}} $\mathbb{R}^n$ denotes the $n$-dimensional Euclidean space and $\mathbb{R}_{+{+}}$ positive real numbers. 
For $v,w \in \mathbb{R}^n$ and $M \in \mathbb{R}^{p \times q}$, $\|v\|\triangleq \sqrt{v^\top v}$ and $\|M\|$ denote their (induced) 2-norm, \yongn{and} 
$v \leq w$ is an element-wise inequality. 
Moreover, the transpose, 
Moore-Penrose pseudoinverse{, $(i,j)$-th element}, \mohk{the \syong{largest} 
eigenvalue} 
and rank of $M$ are given by $M^\top$, 
$M^\dagger${, $M_{i,j}$}, \mohk{$\lambda_{\max}(M)$} 
and ${\rm rk}(M)$, \mohk{respectively}.  \yongs{$M_{r:s}$} is a sub-matrix of $M$, consisting of its $r$-th through $s$-th rows. We call $M$ a non-negative matrix, i.e., $M \geq 0$, if $M_{i,j} \geq 0, \forall i \in \{1\dots p\},\forall j \in \{1 \dots q \}$. We also define \mohamm{$M^+\triangleq \max(M,0_{p \times q}),M^{-} \triangleq M^+-M$} 
\moha{and $|M| \triangleq M^++M^{-}$}. 
For a symmetric matrix $S$, $S \succ 0$ and $S \prec 0$ ($S \succeq 0$ and $S \preceq 0$) are
 positive and negative (semi-)definite, respectively.
 
Next, we introduce some useful definitions and related results. 

\begin{defn}[Interval, Maximal and Minimal Elements, Interval Width]\label{defn:interval}
{An (multi-dimensional) interval {$\mathcal{I}  \subset 
\mathbb{R}^n$} is the set of all real vectors $x \in \mathbb{R}^n$ that satisfies $\underline{s} \le x \le \overline{s}$, where $\underline{s}$, $\overline{s}$ and $\|\overline{s}-\underline{s}\|$ are called minimal vector, maximal vector and width of $\mathcal{I}$, respectively}.
\end{defn}
\moha{Next, we will briefly restate our previous result in \cite{singh2018mesh}, tailoring it \yongs{specifically for intervals to help with computing} 
affine bounding functions for our vector fields.}
\begin{prop}\cite[Affine Abstraction]{singh2018mesh}\label{prop:affine abstractions}
Consider the vector field $f(.):\mathcal{B} \subset \mathbb{R}^n \to \mathbb{R}^m$, where $\mathcal{B}$ is an interval 
with $\overline{x},\underline{x},\mathcal{V}_{\mathcal{B}}$ being its maximal, minimal and set of vertices, respectively. Suppose $\overline{A}_{\mathcal{B}},\underline{A}_{\mathcal{B}}, \overline{e}_{\mathcal{B}}, \underline{e}_{\mathcal{B}},\theta_{\mathcal{B}}$ \moham{is a solution of} the following \yongs{linear program (LP):}   
\begin{align} \label{eq:abstraction}
&\min\limits_{\theta,\overline{A},\underline{A},\overline{e},\underline{e}} {\theta} \\
\nonumber & \quad \quad s.t \ \underline{A} {x}_{s}+\underline{e}+\sigma \leq f({x}_{s}) \leq \overline{A} {x}_{s}+\overline{e}-\sigma, \\
\nonumber &\quad \quad \quad \ (\overline{A}-\underline{A}) {x}_{s}+\overline{e}-\underline{e}-2\sigma \leq \theta \mathbf{1}_m , \ \forall x_s \in \mathcal{V}_{\mathcal{B}},
\end{align}  
where $\mathbf{1}_m \in \mathbb{R}^m$ is a vector of ones and $\sigma$ can be computed via \yongs{\cite[Proposition 1]{singh2018mesh}} 
\moha{ for different function classes}. Then, $\underline{A} {x}+\underline{e} \leq f(x) \leq  \overline{A} {x}+\overline{e}, \forall x \in \mathcal{B}$. We call $\overline{A},\underline{A}$ upper and lower affine abstraction slopes of function $f(.)$ on 
$\mathcal{B}$. 
\end{prop}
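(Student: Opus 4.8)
The plan is to exploit the fact that $\mathcal{B}$ is an interval, hence a hyperrectangle, so that it is the convex hull of its finite vertex set $\mathcal{V}_{\mathcal{B}}$, and to propagate the vertex inequalities of the LP to the interior by combining convexity with the affineness of the candidate bounding maps. Observe first that only the \emph{first} family of LP constraints (the sandwich at the vertices with the $\pm\sigma$ margins) is needed for the containment claim; the objective $\theta$ and the second constraint family concern only the tightness of the resulting bounds and play no role here. Concretely, the first step is to fix an arbitrary $x \in \mathcal{B}$ and write it through convex (barycentric/multilinear) coordinates of the box, $x = \sum_{i} \lambda_i x_s^{(i)}$ with $x_s^{(i)} \in \mathcal{V}_{\mathcal{B}}$, $\lambda_i \geq 0$ and $\sum_i \lambda_i = 1$; such a representation exists because $\mathcal{B} = \mathrm{conv}(\mathcal{V}_{\mathcal{B}})$.

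The second step treats the upper bound, the lower bound being symmetric. Since $x \mapsto \overline{A}_{\mathcal{B}} x + \overline{e}_{\mathcal{B}}$ is affine, any convex representation with $\sum_i \lambda_i x_s^{(i)} = x$ and $\sum_i \lambda_i = 1$ reproduces it exactly, giving $\overline{A}_{\mathcal{B}} x + \overline{e}_{\mathcal{B}} = \sum_i \lambda_i (\overline{A}_{\mathcal{B}} x_s^{(i)} + \overline{e}_{\mathcal{B}})$. Evaluating the first LP constraint at each vertex yields $\overline{A}_{\mathcal{B}} x_s^{(i)} + \overline{e}_{\mathcal{B}} \geq f(x_s^{(i)}) + \sigma$, and substituting termwise (using $\sum_i \lambda_i = 1$ so that the $\sigma$ offset survives intact) produces $\overline{A}_{\mathcal{B}} x + \overline{e}_{\mathcal{B}} \geq \sum_i \lambda_i f(x_s^{(i)}) + \sigma$. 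It then remains only to compare $f(x)$ with the interpolated value $\sum_i \lambda_i f(x_s^{(i)})$.

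The third and decisive step, which I expect to be the main obstacle, is to close the residual gap $f(x) - \sum_i \lambda_i f(x_s^{(i)}) \leq \sigma$, since this is precisely where the nonlinearity of $f$ enters. The offset $\sigma$ is constructed in \cite[Prop.~1]{singh2018mesh} to upper bound, elementwise and uniformly over $\mathcal{B}$, the deviation of $f$ from the interpolant of its vertex values for the relevant function class (e.g., Lipschitz continuous $f$); the subtle point is to ensure the convex coordinates $\lambda_i$ chosen in the first step are exactly those for which this deviation bound is certified. Invoking that bound gives $f(x) \leq \sum_i \lambda_i f(x_s^{(i)}) + \sigma \leq \overline{A}_{\mathcal{B}} x + \overline{e}_{\mathcal{B}}$, and the mirror argument using the lower half of the first constraint (with the $-\sigma$ margin) yields $\underline{A}_{\mathcal{B}} x + \underline{e}_{\mathcal{B}} \leq f(x)$. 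Since $x \in \mathcal{B}$ was arbitrary, the sandwich $\underline{A}_{\mathcal{B}} x + \underline{e}_{\mathcal{B}} \leq f(x) \leq \overline{A}_{\mathcal{B}} x + \overline{e}_{\mathcal{B}}$ holds on all of $\mathcal{B}$, completing the proof.
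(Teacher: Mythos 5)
The paper itself gives no proof of this proposition---it is imported directly from \cite{singh2018mesh}---so there is no in-paper argument to compare against; your reconstruction is correct and is exactly the intended one. You rightly isolate the first constraint family as the only one relevant to containment, combine $\mathcal{B}=\mathrm{conv}(\mathcal{V}_{\mathcal{B}})$ with the affineness of $x\mapsto \overline{A}_{\mathcal{B}}x+\overline{e}_{\mathcal{B}}$ and $x\mapsto \underline{A}_{\mathcal{B}}x+\underline{e}_{\mathcal{B}}$, and absorb the residual $f(x)-\sum_i\lambda_i f(x_s^{(i)})$ into the margin $\sigma$, which is precisely what \cite[Proposition 1]{singh2018mesh} certifies for the relevant function class---including your correctly flagged caveat that the convex coordinates $\lambda_i$ must be the ones for which that interpolation-error bound is stated.
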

\begin{prop}\cite[Lemma 1]{efimov2013interval}\label{prop:bounding}
Let $A \in \mathbb{R}^{m \times n}$ and $\underline{x} \leq x \leq \overline{x} \in \mathbb{R}^n$. Then\moh{,} $A^+\underline{x}-A^{-}\overline{x} \leq Ax \leq A^+\overline{x}-A^{-}\underline{x}$. As a corollary, if $A$ is non-negative, $A\underline{x} \leq Ax \leq A\overline{x}$. 
\end{prop}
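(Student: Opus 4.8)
The plan is to reduce the general statement to the special case of a non-negative matrix, which can be verified entrywise, and then to assemble the two one-sided bounds by tracking signs carefully. First I would establish the elementary monotonicity fact underlying the stated corollary: for any non-negative matrix $M \geq 0$ and any $\underline{x} \leq x \leq \overline{x}$, we have $M\underline{x} \leq Mx \leq M\overline{x}$. This follows directly from the entrywise definition of non-negativity, since the $i$-th component of $M(\overline{x}-x)$ equals $\sum_j M_{i,j}(\overline{x}_j - x_j)$, a sum of products of non-negative terms and hence non-negative; applying the same argument to $M(x-\underline{x})$ yields the lower bound. This already proves the corollary as the case $A^- = 0$.

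Next I would invoke the decomposition $A = A^+ - A^-$, where by definition $A^+ = \max(A,0)$ and $A^- = A^+ - A$ are both non-negative matrices. Writing $Ax = A^+x - A^-x$, I would apply the monotonicity fact separately to $A^+$ and to $A^-$, obtaining $A^+\underline{x} \leq A^+x \leq A^+\overline{x}$ and $A^-\underline{x} \leq A^-x \leq A^-\overline{x}$. The crucial bookkeeping step is that the $A^-$ contribution enters $Ax$ with a negative sign, so its inequality chain reverses upon negation, giving $-A^-\overline{x} \leq -A^-x \leq -A^-\underline{x}$. Adding the chain for $A^+x$ to the reversed chain for $-A^-x$ then produces simultaneously the lower bound $A^+\underline{x} - A^-\overline{x} \leq Ax$ and the upper bound $Ax \leq A^+\overline{x} - A^-\underline{x}$, which is exactly the claim.

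I do not expect any genuine obstacle here: the argument is entirely elementary once the non-negative monotonicity lemma is in hand. The only point requiring care is the sign bookkeeping for the $A^-$ term, i.e. ensuring that $\overline{x}$ attaches to $A^+$ and $\underline{x}$ to $A^-$ in the upper bound (and the roles swap in the lower bound), so that the reversal induced by the minus sign is applied consistently throughout.
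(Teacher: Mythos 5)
Your proof is correct: decomposing $A = A^{+} - A^{-}$ into two non-negative matrices, proving entrywise monotonicity for non-negative matrices, and reversing the inequality chain for the $-A^{-}x$ term yields exactly the claimed two-sided bound, with the corollary as the special case $A^{-} = 0$. Note, however, that the paper never proves Proposition~\ref{prop:bounding} itself --- it is imported by citation from \cite{efimov2013interval} --- so the closest in-paper argument to compare against is the proof of Lemma~\ref{lem:tightness}, which takes a genuinely different route: for each row $j$ it solves the linear program $\max_{\underline{x} \leq x \leq \overline{x}} \sum_{i} A_{j,i}x_{i}$ explicitly, observing that the maximizer sets $x_{i}^{*} = \overline{x}_{i}$ when the coefficient is non-negative and $x_{i}^{*} = \underline{x}_{i}$ otherwise. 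That row-wise optimization proves something strictly stronger than your argument: it shows the expressions $A^{+}\overline{x} - A^{-}\underline{x}$ and $A^{+}\underline{x} - A^{-}\overline{x}$ are \emph{attained}, i.e., they coincide with the exact elementwise supremum and infimum of $Ax$ over the box, which is what Definition~\ref{defn:tightness} and Lemma~\ref{lem:framerd} actually need. Your route only establishes validity of the bounds, but it is more modular --- the non-negative monotonicity lemma is reusable and the sign bookkeeping is mechanical --- and it is entirely sufficient for the statement as posed. Both arguments rest on the same underlying fact (coefficients of each sign attach to the corresponding endpoint of the interval), so there is no gap in your proposal; it simply stops at correctness where the paper's Lemma~\ref{lem:tightness} continues on to tightness.
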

\begin{lem} \label{lem:tightness}
Suppose the assumptions in Proposition \ref{prop:bounding} hold. Then, the returned bounds for $Ax$ is tight, in the sense that $\sup\limits_{\underline{x} \leq x \leq \overline{x}}Ax=A^+\overline{x}-A^{-}\underline{x}$ and $\inf\limits_{\underline{x} \leq x \leq \overline{x}}Ax=A^+\underline{x}-A^{-}\overline{x}$, where $\sup$ and $\inf$ are considered element-wise.
\end{lem}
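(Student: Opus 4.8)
The plan is to show that the bounds furnished by Proposition~\ref{prop:bounding} are not merely valid but actually attained, by exploiting the separable affine structure of the map $x \mapsto Ax$ over the box $\{x : \underline{x} \le x \le \overline{x}\}$. First I would invoke Proposition~\ref{prop:bounding} directly: it gives $A^+\underline{x}-A^-\overline{x} \le Ax \le A^+\overline{x}-A^-\underline{x}$ for every admissible $x$. Hence $A^+\overline{x}-A^-\underline{x}$ is an element-wise upper bound and $A^+\underline{x}-A^-\overline{x}$ an element-wise lower bound for the image set, which already establishes the inequalities ``$\le$'' in $\sup_{\underline{x}\le x\le\overline{x}} Ax \le A^+\overline{x}-A^-\underline{x}$ and ``$\ge$'' in the corresponding $\inf$ statement. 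It then remains only to exhibit admissible $x$ achieving these bounds coordinate-by-coordinate.

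Next I would fix a row index $i$ and observe that the $i$-th component $(Ax)_i = \sum_{j=1}^n A_{i,j}\,x_j$ is affine and separable in the entries $x_j$, each of which ranges independently over $[\underline{x}_j,\overline{x}_j]$. Consequently the supremum of $(Ax)_i$ decouples across $j$, and each summand $A_{i,j}x_j$ is individually maximized by the vertex choice $x_j=\overline{x}_j$ when $A_{i,j}\ge 0$ and $x_j=\underline{x}_j$ when $A_{i,j}<0$. Collecting these choices defines a maximizing vertex $x^{\star}$; crucially, since the supremum is interpreted element-wise, $x^{\star}$ is allowed to depend on the row $i$, so attainment of each component does not require a single common maximizer.

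Finally I would evaluate $(Ax^{\star})_i$ and match it to the claimed closed form through the definitions $A^+=\max(A,0)$ and $A^-=A^+-A$. On indices with $A_{i,j}\ge 0$ one has $A^+_{i,j}=A_{i,j}$ and $A^-_{i,j}=0$, while on indices with $A_{i,j}<0$ one has $A^+_{i,j}=0$ and $A^-_{i,j}=-A_{i,j}$; substituting the vertex $x^{\star}$ therefore yields $(Ax^{\star})_i=\sum_j A^+_{i,j}\overline{x}_j-\sum_j A^-_{i,j}\underline{x}_j=(A^+\overline{x}-A^-\underline{x})_i$. This proves the supremum is attained in every coordinate, giving equality. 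The infimum identity follows by the symmetric argument, swapping the roles of $\overline{x}$ and $\underline{x}$, or equivalently by applying the supremum result to $-A$ and using $(-A)^+=A^-$, $(-A)^-=A^+$.

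I expect no substantive obstacle here: the result is essentially the statement that a linear functional over a box attains its extrema at vertices. The single point requiring care is the interpretation of ``element-wise'' $\sup$ and $\inf$, which permits a distinct optimizing vertex for each coordinate and thereby decouples the otherwise coupled optimization; flagging this explicitly is what makes the attainment argument rigorous.
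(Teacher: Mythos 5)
Your proposal is correct and follows essentially the same route as the paper's proof: fix a coordinate, exploit the separability of $\sum_j A_{i,j}x_j$ over the box, choose the vertex $x_j=\overline{x}_j$ or $\underline{x}_j$ according to the sign of $A_{i,j}$, and match the resulting value to $(A^+\overline{x}-A^-\underline{x})_i$ via the definitions of $A^+$ and $A^-$. Your version is in fact slightly cleaner, since it explicitly notes that the element-wise interpretation permits a different maximizing vertex per row and avoids the sign slip ($x^*_i=-\underline{x}_i$) present in the paper's argument.
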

\begin{proof}
For $j \hspace{-.1cm}\in \hspace{-.1cm} \{1\dots m\}$, consider the problem of $\overline{s}_j=\max\limits_{\underline{x} \leq x \leq \overline{x}}[Ax]_j$, where $[Ax]_j = \sum_{i=1}^n A_{j,i}x_i$ is the $j$-th argument of the vector $Ax$. Obviously, the solutions of this program are $x^*_i=\overline{x}_i$ if $A_{i,j} \geq 0$, and $x^*_i=-\underline{x}_i$ if $A_{i,j} < 0, \forall i \in \{1\dots n \}$. Hence $\overline{s}_j =[A]^+_j\overline{x}-[A]^{-}_j \underline{x}$, where $[A]_j$ is the $j$-th row of $A$. Similarly, $\underline{s}_j=\min_{\underline{x} \leq x \leq \overline{x}}[Ax]_j=[A]^+_j\underline{x}-[A]^{-}_j \overline{x}$. The proof is complete, since 
 $\sup_{\underline{x} \leq x \leq \overline{x}}Ax=[ \overline{s}_1  \dots \overline{s}_m ]^\top$ (similar for $\inf$).
\end{proof}
\begin{defn}[Lipschitz Continuity]\label{defn:lip}
function $f(\cdot):\mathbb{R}^n \rightarrow \mathbb{R}^m$ is $L_f$-Lipschitz continuous on $ \mathbb{R}^n$, if $\exists L_f \in \mathbb{R}_{+{+}}$, such that $\|f(x_1)-f(x_2)\| \leq L_f \|x_1-x_2\|$, $ \forall x_1,x_2 \in \mathbb{R}^n$.
\end{defn}
\begin{defn}[Mixed-Monotone Mappings and Decomposition Functions]\cite[Definition 4]{yang2019sufficient}\label{defn:mixed-monotone}
A mapping $f:\mathcal{X} \subseteq \mathbb{R}^n \rightarrow \mathcal{T} \subseteq \mathbb{R}^m$ is mixed monotone if there exists a decomposition function $f_d:\mathcal{X} \times \mathcal{X} \rightarrow \mathcal{T}$ \mohamm{\yongz{that} is monotonically increasing and decreasing in its first and second arguments, respectively, and \yongz{satisfies} $f_d(x,x)=f(x), \forall x \in \mathcal{X}$}.
\end{defn}
\begin{prop}\mohk{\cite[Theorem 1]{coogan2015efficient},\cite{moisan2005interval}} \label{prop:embedding}
Let $f:\mathcal{X} \subseteq \mathbb{R}^n \rightarrow \mathcal{T} \subseteq \mathbb{R}^m$ be a mixed monotone mapping with decomposition function $f_d:\mathcal{X} \times \mathcal{X} \rightarrow \mathcal{T}$ and $\underline{x} \leq x \leq \overline{x}$, where $\underline{x},x,\overline{x} \in \mathcal{X}$. Then $f_d(\underline{x},\overline{x}) \leq f(x) \leq f_d(\overline{x},\underline{x})$.
\end{prop}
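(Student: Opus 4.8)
The plan is to establish both inequalities by a two-step chain argument that exploits the two monotonicity properties of the decomposition function $f_d$ separately, together with the identity $f_d(x,x)=f(x)$ furnished by Definition \ref{defn:mixed-monotone}. The key idea is that each of $\overline{x}$ and $\underline{x}$ should be made to enter $f_d$ in the slot where it produces the correct bound: the maximal vector $\overline{x}$ in the increasing (first) slot for the upper bound, and in the decreasing (second) slot for the lower bound, and symmetrically for $\underline{x}$.

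First I would prove the upper bound $f(x) \leq f_d(\overline{x},\underline{x})$. Writing $f(x)=f_d(x,x)$, I would hold the second argument fixed and increase the first: since $f_d$ is monotonically increasing in its first argument and $x \leq \overline{x}$, we obtain $f_d(x,x) \leq f_d(\overline{x},x)$. Next I would hold the first argument fixed and decrease the second: since $f_d$ is monotonically decreasing in its second argument and $\underline{x} \leq x$, we obtain $f_d(\overline{x},x) \leq f_d(\overline{x},\underline{x})$. Chaining these two inequalities yields $f(x) \leq f_d(\overline{x},\underline{x})$.

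The lower bound $f_d(\underline{x},\overline{x}) \leq f(x)$ then follows by the symmetric argument. Using $\underline{x} \leq x$ together with monotone increase in the first slot gives $f_d(\underline{x},\overline{x}) \leq f_d(x,\overline{x})$; using $x \leq \overline{x}$ together with monotone decrease in the second slot gives $f_d(x,\overline{x}) \leq f_d(x,x)=f(x)$. Combining the two chains produces the claimed sandwich $f_d(\underline{x},\overline{x}) \leq f(x) \leq f_d(\overline{x},\underline{x})$.

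I expect no genuine obstacle here, since the statement is essentially an unwinding of Definition \ref{defn:mixed-monotone}; the only point that requires care is to vary a single argument at a time so that each monotonicity property is invoked with the other argument held fixed. Varying both arguments simultaneously would not match either monotonicity hypothesis directly. Throughout, all inequalities are understood element-wise, consistent with Definition \ref{defn:interval}.
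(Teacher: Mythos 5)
Your proof is correct: the two-step chains (varying one argument of $f_d$ at a time, using monotone increase in the first slot, monotone decrease in the second, and the identity $f_d(x,x)=f(x)$) are exactly the standard argument for this embedding result. Note that the paper itself gives no proof of Proposition \ref{prop:embedding} --- it is imported by citation from \cite[Theorem 1]{coogan2015efficient} and \cite{moisan2005interval} --- and your argument is essentially the same one used in those references, so there is nothing to reconcile.
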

\syong{Note that decomposition functions for nonlinear functions are not unique, and several different ones have been proposed in the literature, e.g., \cite{mazenc2014interval,yang2019sufficient,moisan2005interval,coogan2015efficient,yang2019tight}. Although any decomposition function can be used in conjunction with our proposed interval observer, we will adopt the specific decomposition function given in \cite[Theorem 2]{yang2019sufficient}, since it allows us to derive a \emph{Lipschitz-like} property (in Lemma \ref{lem:lip-dec}) that further enables us to derive sufficient conditions for the stability of the proposed observer in \eqref{eq:xup}--\eqref{eq:hlow}. 

Thus, we now briefly describe the decomposition function given in \cite[Theorem 2]{yang2019sufficient} that we will adopt in this paper: 
}

If a vector field $q=\begin{bmatrix} h^\top_1 & \dots & q^\top_n \end{bmatrix}^\top:X \subseteq \mathbb{R}^n \rightarrow \mathbb{R}^m$ is differentiable and its partial derivatives are bounded with known bounds, i.e., $\frac{\partial q_i}{\partial x_j} \in (a^q_{i,j},b^q_{i,j}), \forall x \in X \in \mathbb{R}^n$, where $a^q_{i,j},b^q_{i,j} \in \overline{\mathbb{R}}$, then $h$ is mixed monotone with a decomposition function $q_d=\begin{bmatrix} q^\top_{d1} & \dots & q^\top_{di} & \dots q^\top_{dn} \end{bmatrix}^\top$, where $q_{di}(x,y)=q_{i}(z)+(\alpha^q_i-\beta^q_i)^\top (x-y), \forall i \in \{1,\dots,n\}$, and $z,\alpha^q_i,\beta^h_i \in \mathbb{R}^n$ can be computed in terms of $x, y, a^q_{i,j}, b^q_{i,j}$ as given in \cite[(10)--(13)]{yang2019sufficient}. Consequently, for $x=[x_1\dots x_j \dots x_n]^\top$, $y=[y_1\dots y_j \dots y_n]^\top$, we have 
\begin{align}\label{eq:decompconstruct}
q_{d}(x,y)&=q(z)+C_q(x-y),
\end{align}
where $C_q \hspace{-.1cm}  \triangleq \hspace{-.1cm}  \begin{bmatrix} [\alpha^q_1-\beta^q_1] & \hspace{-0.2cm}\dots \hspace{-0.2cm} & [\alpha^q_i-\beta^q_i] & \dots [\alpha^q_m-\beta^q_m] \end{bmatrix}^\top \hspace{-.2cm} \in \hspace{-.1cm}  \mathbb{R}^{m \times n}$, 
with 
$\alpha^q_i,\beta^q_i$ given in \cite[(10)--(13)]{yang2019sufficient}, $z=[z_1 \dots z_j \dots z_m]^\top$ and $z_j=x_j$ or $y_j$ (dependent on the case, cf. \cite[Theorem 1 and (10)--(13)]{yang2019sufficient} for details). Moreover, if \yongs{exact values of} $a_{i,j}, b_{i,j}$ 
are unknown, 
their approximations can be obtained using Proposition \ref{prop:affine abstractions} \yongs{with the slopes set to 0.} 
\begin{cor}\label{cor:bounding}
As a direct implication of Propositions \ref{prop:affine abstractions}--\ref{prop:embedding}, for any Lipschitz mixed-monotone vector-field $q(.):\mathbb{R}^n \to \mathbb{R}^m$, with a decomposition function $q_d(.,.)$, we can find upper and lower vectors $\overline{q},\underline{q}$ such that $\underline{q} \leq q(x) \leq \overline{q}, \forall x \in [\underline{x},\overline{x}]$, and 
\begin{align}
\nonumber &\underline{q} = \max(q_d(\underline{x},\overline{x}),\hat{\underline{q}}), \quad \overline{q} = \min(q_d(\overline{x},\underline{x}),\hat{\overline{q}}), \\
 \nonumber &\hat{\underline{q}} =(\underline{A}^q)^+\underline{x}-(\underline{A}^q)^{-}\overline{x}+\underline{e}^q,\hat{\overline{q}} = (\overline{A}^q)^+\overline{x}\hspace{-.05cm}-\hspace{-.05cm}(\overline{A}^q)^{-}\underline{x}\hspace{-.05cm}+\hspace{-.05cm}\overline{e}^q \hspace{-.05cm}, 
\end{align}
where $(\overline{A}^q,\underline{A}^q,\overline{e}^q,\underline{e}^q)$ \moha{is a solution of \eqref{eq:abstraction}} \yongs{for the function $q$}. 
\end{cor}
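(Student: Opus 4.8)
The plan is to obtain the claimed enclosure by layering the three preceding propositions and then intersecting the two resulting interval bounds element-wise. Since $q$ is Lipschitz and mixed monotone with decomposition $q_d$, Proposition \ref{prop:embedding} immediately supplies one valid pair of bounds, namely $q_d(\underline{x},\overline{x}) \leq q(x) \leq q_d(\overline{x},\underline{x})$ for every $x \in [\underline{x},\overline{x}]$. This accounts for the first argument of the $\max$ and $\min$ appearing in the statement.

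Next I would build the second pair from the affine abstraction. By Proposition \ref{prop:affine abstractions}, any solution $(\overline{A}^q,\underline{A}^q,\overline{e}^q,\underline{e}^q)$ of the LP \eqref{eq:abstraction} for $q$ yields the affine sandwich $\underline{A}^q x + \underline{e}^q \leq q(x) \leq \overline{A}^q x + \overline{e}^q$ on $[\underline{x},\overline{x}]$. These affine bounds still depend on $x$, so the key step is to replace them by constant vectors using Proposition \ref{prop:bounding}: applying its lower estimate to the matrix $\underline{A}^q$ gives $(\underline{A}^q)^+\underline{x}-(\underline{A}^q)^{-}\overline{x} \leq \underline{A}^q x$, and applying its upper estimate to $\overline{A}^q$ gives $\overline{A}^q x \leq (\overline{A}^q)^+\overline{x}-(\overline{A}^q)^{-}\underline{x}$. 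Adding the constant offsets $\underline{e}^q,\overline{e}^q$ and chaining these with the affine sandwich produces exactly $\hat{\underline{q}} \leq q(x) \leq \hat{\overline{q}}$, the second argument of the $\max$/$\min$.

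Finally I would intersect the two enclosures. Having shown that $q(x)$ lies above both $q_d(\underline{x},\overline{x})$ and $\hat{\underline{q}}$ and below both $q_d(\overline{x},\underline{x})$ and $\hat{\overline{q}}$ for all admissible $x$, the element-wise maximum of the two lower bounds and minimum of the two upper bounds remain valid bounds, i.e. $\underline{q} = \max(q_d(\underline{x},\overline{x}),\hat{\underline{q}}) \leq q(x) \leq \min(q_d(\overline{x},\underline{x}),\hat{\overline{q}}) = \overline{q}$, which is the assertion. I do not expect a genuine obstacle here, since the argument is a direct composition of previously established results; the only point requiring minor care is that all inequalities must be interpreted coordinate-wise, so that taking the tighter of the two scalar bounds in each component preserves the enclosure without requiring the candidate bounds to be comparable as whole vectors.
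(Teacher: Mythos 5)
Your proof is correct and follows exactly the argument the paper intends: the corollary is stated without a written proof precisely because it is the direct composition of Proposition~\ref{prop:embedding} (decomposition-function bounds), Proposition~\ref{prop:affine abstractions} chained with Proposition~\ref{prop:bounding} (constant bounds from the affine abstraction), and an element-wise intersection of the two enclosures. Nothing is missing; your closing remark that the two candidate bounds need not be comparable as vectors, only coordinate-wise, is the right (and only) point of care.
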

Finally, we \mohamm{restate} a Lipschitz-like property for the bounding functions 
in Corollary \ref{cor:bounding}, which \mohamm{we derived in \cite{khajenejad2020simultaneous}} and will be used later for \yongs{determining observer stability}. 
\vspace{-.2cm}
\begin{lem}\cite[Lemma 1]{khajenejad2020simultaneous} \label{lem:lip-dec}
Let $q(.):[\underline{x},\overline{x}] \subset \mathbb{R}^n \to \mathbb{R}^m$ be the Lipschitz mixed-monotone vector-field in Corollary \ref{cor:bounding}, with its decomposition function $q_d(.,.)$ constructed using \eqref{eq:decompconstruct}. Then, $\| \overline{q}-\underline{q}\| \leq \|q_d(\overline{x},\underline{x})-q_d(\underline{x}_,\overline{x})\| \leq L_{q_d}\|\overline{x}-\underline{x}\|$, where $L_{q_d} \triangleq L_q+2\|C_q\|$, with $C_q$ given in \eqref{eq:decompconstruct}.
\end{lem}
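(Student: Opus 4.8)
The plan is to establish the two inequalities separately: the left one directly from the min/max characterization in Corollary~\ref{cor:bounding}, and the right one from the explicit decomposition formula~\eqref{eq:decompconstruct} together with the Lipschitz property of $q$.

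For the left inequality, I would note that Corollary~\ref{cor:bounding} defines $\overline{q}=\min(q_d(\overline{x},\underline{x}),\hat{\overline{q}})$ and $\underline{q}=\max(q_d(\underline{x},\overline{x}),\hat{\underline{q}})$, so element-wise $\overline{q}\le q_d(\overline{x},\underline{x})$ and $\underline{q}\ge q_d(\underline{x},\overline{x})$, whence $\overline{q}-\underline{q}\le q_d(\overline{x},\underline{x})-q_d(\underline{x},\overline{x})$. Both differences are non-negative vectors---the first because $\underline{q}\le q(x)\le\overline{q}$ are valid enclosing bounds, and the second by Proposition~\ref{prop:embedding} applied with $\underline{x}\le\overline{x}$---so that $0\le\overline{q}-\underline{q}\le q_d(\overline{x},\underline{x})-q_d(\underline{x},\overline{x})$ coordinate-wise. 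Since $0\le a_i\le b_i$ implies $a_i^2\le b_i^2$ and hence $\|a\|\le\|b\|$ in the Euclidean norm, the left inequality follows immediately.

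For the right inequality, I would substitute~\eqref{eq:decompconstruct}. Denoting by $z_1$ and $z_2$ the intermediate points $z$ produced by the evaluations $q_d(\overline{x},\underline{x})$ and $q_d(\underline{x},\overline{x})$, respectively, the linear correction terms add rather than cancel under the argument swap, giving $q_d(\overline{x},\underline{x})-q_d(\underline{x},\overline{x})=\bigl(q(z_1)-q(z_2)\bigr)+2C_q(\overline{x}-\underline{x})$. The triangle inequality and the $L_q$-Lipschitz continuity of $q$ from Definition~\ref{defn:lip} then yield $\|q_d(\overline{x},\underline{x})-q_d(\underline{x},\overline{x})\|\le L_q\|z_1-z_2\|+2\|C_q\|\,\|\overline{x}-\underline{x}\|$.

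The crux of the argument---and the step I expect to be the main obstacle---is to show that $\|z_1-z_2\|=\|\overline{x}-\underline{x}\|$, since once this holds, collecting terms produces exactly $L_{q_d}=L_q+2\|C_q\|$. Here I would invoke the coordinate-wise selection rule of \cite[Theorem~2 and (10)--(13)]{yang2019sufficient}: each entry $z_j$ equals either the $j$-th coordinate of the first argument or of the second, and this choice is dictated solely by the (fixed) sign pattern of the partial-derivative bounds of $q$, independent of the numerical values of $\overline{x},\underline{x}$. Consequently, swapping the arguments from $(\overline{x},\underline{x})$ to $(\underline{x},\overline{x})$ interchanges, in every coordinate, which of $\overline{x}_j,\underline{x}_j$ is selected, so that $|z_{1,j}-z_{2,j}|=|\overline{x}_j-\underline{x}_j|$ for all $j$ and therefore $\|z_1-z_2\|=\|\overline{x}-\underline{x}\|$. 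The care required is precisely in verifying that the selection pattern depends only on $q$ and not on the evaluation point, so that both evaluations are governed by the same rule; granting this, the chain of inequalities closes and the lemma follows.
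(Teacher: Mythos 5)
Your proof is correct and takes essentially the same route as the lemma's source proof (note that this paper does not reprove the result but cites it from \cite{khajenejad2020simultaneous}): the first inequality follows element-wise from the $\min$/$\max$ construction in Corollary~\ref{cor:bounding} together with non-negativity of both difference vectors, and the second from \eqref{eq:decompconstruct}, the triangle inequality, and the key observation that the coordinate-selection rule for $z$ is fixed by the partial-derivative bounds of $q$, so swapping the arguments yields complementary selections and $\|z_1-z_2\|=\|\overline{x}-\underline{x}\|$. No gaps to report.
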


\vspace{-0.3cm}
\section{Problem Formulation} \label{sec:Problem}
\vspace{-0.1cm}
\noindent\textbf{\emph{System Assumptions.}} 
Consider the nonlinear discrete-time system with unknown inputs and bounded noise 
\begin{align} \label{eq:system}
\begin{array}{ll}
x_{k+1}&=f(x_k)+B u_k+G d_k + w_k,\\
y_k&=g(x_k) +D u_k + H d_k + v_k, \end{array}
\end{align}
where at time $k \in \mathbb{N}$, $x_k \in \mathbb{R}^n$, $u_k \in \mathbb{R}^m$, $d_k \in \mathbb{R}^p$ and $y_k \in \mathbb{R}^l$ are the state, 
 a known input, an unknown input and the measurement vectors, correspondingly. The process and measurement noise signals $w_k \in \mathbb{R}^n$ and $v_k \in \mathbb{R}^l$ are assumed to be bounded \mohamm{and satisfy} $\underline{w} \leq w_k \leq \overline{w}$, $\underline{v} \leq v_k \leq \overline{v}$, \mohamm{with} the known lower and upper bounds, $\underline{w}$, $\overline{w}$ and $\underline{v}$, $\overline{v}$, respectively. 
We also assume $\underline{x}_0 \leq x_0 \leq \overline{x}_0$, \mohamm{where} $\underline{x}_0$ and $\overline{x}_0$, {are} \mohamm{available} lower and upper bounds for the initial state $ x_0 $. The vector fields $f(\cdot):\mathbb{R}^n \rightarrow \mathbb{R}^n$, $g(\cdot):\mathbb{R}^n \rightarrow \mathbb{R}^l$ and matrices $B$, $D$, $G$ and $H$ are known 
and of appropriate dimensions, where $G$ and $H$ 
encoding the \emph{locations} through which the unknown input \yongn{(or attack)} signal can affect the system dynamics and measurements. Note that no assumption is made on $H$ to be either the zero matrix or to have full column rank when there is direct feedthrough (in contrast \yongs{to} \cite{khajenejad2020simultaneous}). \mohamm{Without loss of generality, we assume that $\begin{bmatrix} G^\top & H^\top \end{bmatrix}^\top$ is full-rank}.  
Moreover, \mohamm{the following, which is satisfied for a broad range of nonlinearities \cite{yang2019tight}, is assumed}. 

\vspace{-0.1cm}
\begin{assumption}\label{assumption:mix-lip}
Vector fields $f(\cdot)$ and $g(\cdot)$ are mixed-monotone with decomposition functions $f_d(\cdot,\cdot)$ and $g_d(\cdot,\cdot)$ and $L_f$-Lipschitz and $L_g$-Lipschitz continuous, respectively. 
\end{assumption} \vspace{-0.1cm}

\noindent \textbf{\emph{Unknown Input (or Attack) Signal Assumptions.}} 
The unknown inputs $d_k$ are not constrained to follow any model nor to be a signal of any type (random or strategic), hence no prior `useful' knowledge of the dynamics of $d_k$ is available (independent of $\{d_\ell\}$ $\forall k\neq \ell$, $\{w_\ell\}$ and $\{v_\ell\}$ $ \forall  \ell$). We also do not assume that $d_k$ is bounded or has known bounds and thus, $d_k$ is suitable for representing adversarial 
attack signals.

The 
observer design problem 
 can be stated as follows:
\begin{problem}\label{prob:SISIO}
Given a nonlinear discrete-time system with unknown inputs and bounded noise \eqref{eq:system}, design a stable observer that simultaneously finds bounded intervals 
of compatible states and unknown inputs. 
\end{problem}

\section{General Simultaneous Input and State Interval Observers (GSISIO)} \label{sec:observer}
\subsection{Interval Observer Design} \label{sec:obsv}
We consider a recursive \mohamm{two}-step interval-valued observer design, composed of a \emph{state propagation} step, which propagates the previous time state estimates through the state equation to find propagated intervals, 
and an \emph{unknown input estimation} step, which computes the input intervals using state intervals and observation. 
Note that we are constrained with obtaining a one-step delayed estimate of \mohamm{the unknown input signal}, 
because in contrast with \cite{khajenejad2020simultaneous}, the matrix $H$ is not necessarily full-rank, and hence $d_k$ cannot be estimated from the current measurement, $y_k$. However, in Lemma \ref{lem:partial-estimate}, we will discuss a way of obtaining the current estimate of \emph{a component of} the input signal. 

Considering the computational complexity of optimal observers \cite{milanese1991optimal}, as well as nice properties of interval sets \cite{ellero2019unknown}, we consider set estimates of the form: 
\begin{align*}
\mathcal{I}^x_{k}\hspace{-.1cm}=\hspace{-.1cm}\{x \hspace{-.1cm}\in\hspace{-.1cm} \mathbb{R}^n: \underline{x}_{k} \hspace{-.1cm}\leq \hspace{-.1cm} x \leq\hspace{-.1cm} \overline{x}_{k}\},
\mathcal{I}^d_{k-1}\hspace{-.1cm}=\hspace{-.1cm}\{d \hspace{-.1cm}\in\hspace{-.1cm} \mathbb{R}^p: \underline{d}_{k-1} \hspace{-.1cm}\leq \hspace{-.1cm}d \hspace{-.1cm}\leq \hspace{-.1cm}\overline{d}_{k-1} \},
\end{align*}
i.e., we restrict the estimation errors to \yongs{be} closed intervals. 
In this case, the observer design problem boils down to finding 
$\underline{x}_{k}$, $\overline{x}_{k}$, $\underline{d}_{k-1}$ and $\overline{d}_{k-1}$. 
Our interval observer can be defined at each time step $k \geq 1$ as follows (with known $\underline{x}_{0}$ and $\overline{x}_0$ such that $\underline{x}_{0} \leq x_0 \leq \overline{x}_0$):\\[0.2cm]
\noindent \textbf{\emph{State Propagation}}: \vspace{-0.1cm}
\begin{align} \label{eq:xup}
\begin{array}{ll}
  & \hspace{-.5cm}\begin{bmatrix} \overline{x}_{k} & \underline{x}_{k} \end{bmatrix}^\top \hspace{-.1cm}= \hspace{-.05cm} M_f \begin{bmatrix} \overline{f}^\top_k & \underline{f}^\top_k \end{bmatrix}^\top\hspace{-.1cm}+ \hspace{-.05cm}M_g\begin{bmatrix} \overline{g}^\top_k &  \underline{g}^\top_k \end{bmatrix}^\top\hspace{-.05cm}+ 
  \\
 & \hspace{-.15cm}M_v \begin{bmatrix} \overline{v}^\top & \underline{v}^\top \end{bmatrix}^\top\hspace{-.2cm}+\hspace{-.05cm}M_w \begin{bmatrix} \overline{w}^\top & \underline{w}^\top \end{bmatrix}^\top \hspace{-.15cm}+\hspace{-.1cm}M_y y_{k-1}\hspace{-.1cm}+\hspace{-.1cm}M_u u_{k-1};  
\end{array}
\end{align}
\noindent \textbf{\emph{Unknown Input Estimation}}: \vspace{-0.1cm}
\begin{align}
 \overline{d}_{k-1}=N_{11}\overline{h}_{k}+ N_{12} \underline{h}_{k}, \quad  \underline{d}_{k-1}= N_{21} \overline{h}_{k}+ N_{22} \underline{h}_{k} , \label{eq:dup}
\end{align}
where $\forall q \in \{f,g \}$, $\overline{q}_k$ and $\underline{q}_k$ are upper and lower vector values for the function $q(.)$ on the interval $[\underline{x}_{k-1},\overline{x}_{k-1}]$, which can be recursively computed using Corollary \ref{cor:bounding}. Moreover,

\hspace{0.1cm}\phantom{a}\hspace{0.1cm}\vspace{-.3cm}
\begin{small}
\begin{align}
&\overline{h}_{k} \hspace{-.1cm}=\hspace{-.1cm} \begin{bmatrix} \overline{x}^\top_k & \hspace{-.1cm}y^\top_{k-1} \end{bmatrix}\hspace{-.1cm}^\top\hspace{-.15cm}-\hspace{-.1cm}\begin{bmatrix}   \underline{f}^\top_k &   \hspace{-.1cm}\underline{g}^\top_k  \end{bmatrix}\hspace{-.1cm}^\top\hspace{-.15cm}- \hspace{-.1cm}\begin{bmatrix} B^\top & \hspace{-.1cm}D^\top \end{bmatrix}\hspace{-.1cm}^\top \hspace{-.1cm}u_{k-1}\hspace{-.1cm}-\hspace{-.1cm}\begin{bmatrix} \underline{w}^\top & \hspace{-.1cm}\underline{v}^\top \end{bmatrix}\hspace{-.1cm}^\top,  \label{eq:hup} \\
 &\underline{h}_{k} \hspace{-.1cm}=\hspace{-.1cm} \begin{bmatrix} \underline{x}^\top_k & \hspace{-.1cm}y^\top_{k-1} \end{bmatrix}\hspace{-.1cm}^\top\hspace{-.15cm}-\hspace{-.1cm}\begin{bmatrix}   \overline{f}^\top_k &   \hspace{-.1cm}\overline{g}^\top_k  \end{bmatrix}\hspace{-.1cm}^\top\hspace{-.15cm}- \hspace{-.1cm}\begin{bmatrix} B^\top & \hspace{-.1cm}D^\top \end{bmatrix}\hspace{-.1cm}^\top \hspace{-.1cm}u_{k-1}\hspace{-.1cm}-\hspace{-.1cm}\begin{bmatrix} \overline{w}^\top & \hspace{-.1cm}\overline{v}^\top \end{bmatrix}\hspace{-.1cm}^\top. \label{eq:hlow} 
\end{align}
\end{small}\vspace{-.4cm}

  Finally, 
  $M_s$, $N_{nm}$, $\forall s \in \{f,g,u,w,v,y\},n,m \in \{1,2\}$, 
  are to-be-designed observer gains 
  \yongz{in order to} achieve desirable observer properties. Algorithm \ref{algorithm1} summarizes GSISIO. 
\mohk{
\begin{algorithm}[t] \small
\caption{GSISIO}\label{algorithm1}
\begin{algorithmic}[1]
\Function{$GSISIO$}{$f(\cdot),g(\cdot),B,G,D,H,\overline{w},\underline{w},\overline{v},\underline{v},\overline{x}_0,\underline{x}_0$}
		\State Initialize: $\text{maximal}(\mathcal{I}^x_0)=\overline{x}_0$; $\text{minimal}(\mathcal{I}^x_0)=\underline{x}_0$;
		\Statex  Compute \hspace{-0.05cm}$M_s, \hspace{-0.05cm}N_{ij},\hspace{-0.1cm} \forall s \hspace{-0.1cm}\in\hspace{-0.1cm} \{f,g,u,v,w\},i,j \hspace{-0.1cm}\in\hspace{-0.1cm} \{1,2\}$\hspace{-0.1cm} via \hspace{-0.05cm}Theorem \ref{thm:existence};
		\For {$k =1$ to $\overline{K}$}

    	\LineComment{\hspace{0.3cm}Estimation of ${x}_{k}$ (to compute $\mathcal{I}^x_{k}\hspace{-.1cm}=[\underline{x},\overline{x}_{k}]$):}
		\Statex \hspace{0.6cm} Compute $\overline{x}_{k},\underline{x}_{k}$ via\hspace{-0.05cm} \eqref{eq:xup}; Compute 
		$\delta^x_{k}$ through Lemma \ref{lem:convergence};
		
				\LineComment{\hspace{0.3cm}Estimation of ${d}_{k-1}$ (to compute $\mathcal{I}^d_{k-1}\hspace{-.1cm}=[ \underline{d}_{k-1},\overline{d}_{k-1}]$):}
		 \Statex \hspace{0.6cm} 
		 Compute $\overline{d}_{k-1},\underline{d}_{k-1}$ and  $\delta^d_{k-1}$ via \eqref{eq:dup}--\eqref{eq:hlow} and Lemma \ref{lem:convergence};
		 \Statex \hspace{0.6cm} \Return {$\mathcal{I}^x_k,\mathcal{I}^d_{k-1},\delta^x_{k},\delta^d_{k-1}$};
		\EndFor
\EndFunction
		\end{algorithmic}

\end{algorithm}   }  
\vspace{-0.1cm} 
\vspace{-0.4cm}
\subsection{Observer Design}\vspace{-0.05cm}
The objective of this section is to design observer gains such that the GSISIO returns \emph{correct} and \emph{tight} intervals. We first define these properties through the following definitions. 
\begin{defn}[Correctness \moh{(Framer Property \cite{mazenc2013robust})}] \label{defn:correctness}
\yongn{Given an initial interval $x_0 \in [\underline{x}_0,\overline{x}_0]$, the 
GSISIO observer returns correct interval estimates, 
if the true states and unknown inputs of the system \eqref{eq:system} are 
within the estimated intervals \eqref{eq:xup}--\eqref{eq:dup} for all times}. 
If the observer is correct, we call $\{\overline{x}_k,\underline{x}_k,\overline{d}_{k-1},\underline{d}_{k-1}\}_{k=1}^{\infty}$ 
(the state and input) framers.  
\end{defn}
\begin{defn}[Tightness of Input Estimates] \label{defn:tightness}
The input interval estimates 
are tight, if at each time step $k$, given the state estimate, the input framers $\overline{d}_{k-1},\underline{d}_{k-1}$, coincide with supremum and infimum values of the set of compatible inputs. 
\end{defn} \vspace{-0.05cm}
\mohamm{First, the tightness} of the input framers \mohamm{\yongz{is} addressed}. 
\begin{lem}[Correctness and Tightness of Input Estimates]\label{lem:framerd}
Consider the system \eqref{eq:system} along with the GSISIO in \eqref{eq:xup}--\eqref{eq:dup}. Suppose that Assumption \ref{assumption:mix-lip} holds. Let $J \triangleq (\begin{bmatrix} G^\top & H^\top \end{bmatrix}^\top)^\dagger$, $N_{11}=N_{22}=J^+$ and $N_{12}=N_{21}=-J^{-}$. Then, given any pair of state framer sequences $\{\overline{x}_k,\underline{x}_k\}_{i=0}^{\infty}$, the input interval estimates given in \eqref{eq:dup} are correct and tight. 
\end{lem}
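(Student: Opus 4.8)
The plan is to prove correctness and tightness of the input framers by first establishing what "compatible" inputs are, then showing the estimation formula \eqref{eq:dup} returns exactly the supremum and infimum over that compatible set. The key observation is that, stacking the state and output equations of \eqref{eq:system} at time $k-1$, we obtain

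\begin{align*}
\begin{bmatrix} x_k \\ y_{k-1} \end{bmatrix}
= \begin{bmatrix} f(x_{k-1}) \\ g(x_{k-1}) \end{bmatrix}
+ \begin{bmatrix} B \\ D \end{bmatrix} u_{k-1}
+ \begin{bmatrix} G \\ H \end{bmatrix} d_{k-1}
+ \begin{bmatrix} w_{k-1} \\ v_{k-1} \end{bmatrix}.
\end{align*}

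Rearranging to isolate the unknown-input term and left-multiplying by the pseudoinverse $J = \bigl(\begin{bmatrix} G^\top & H^\top \end{bmatrix}^\top\bigr)^\dagger$, and using that $\begin{bmatrix} G^\top & H^\top \end{bmatrix}^\top$ is full column rank (so $J$ is a left inverse), I recover $d_{k-1} = J\,h^\ast_k$, where $h^\ast_k$ is the exact value of the quantity bounded in \eqref{eq:hup}--\eqref{eq:hlow}.

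\textbf{Correctness.} First I would verify that the interval $[\underline{h}_k, \overline{h}_k]$ genuinely contains $h^\ast_k$. Each entry of $h^\ast_k$ is a difference of terms: the true state/output, minus $f(x_{k-1})$ or $g(x_{k-1})$, minus the known $Bu_{k-1}$/$Du_{k-1}$, minus the noise. Using the state framers $\underline{x}_k \le x_k \le \overline{x}_k$, the function bounds $\underline{f}_k \le f(x_{k-1}) \le \overline{f}_k$ and $\underline{g}_k \le g(x_{k-1}) \le \overline{g}_k$ from Corollary \ref{cor:bounding}, and the noise bounds $\underline{w} \le w_{k-1} \le \overline{w}$, $\underline{v} \le v_{k-1} \le \overline{v}$, the direction of each inequality (subtracted terms flip the bound) yields exactly $\underline{h}_k \le h^\ast_k \le \overline{h}_k$. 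Then applying Proposition \ref{prop:bounding} to $d_{k-1} = J h^\ast_k$ gives $J^+ \underline{h}_k - J^- \overline{h}_k \le d_{k-1} \le J^+ \overline{h}_k - J^- \underline{h}_k$, which by the stated gains $N_{11}=N_{22}=J^+$, $N_{12}=N_{21}=-J^-$ is precisely $\underline{d}_{k-1} \le d_{k-1} \le \overline{d}_{k-1}$ from \eqref{eq:dup}. This establishes correctness.

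\textbf{Tightness.} For tightness, by Definition \ref{defn:tightness} I must show $\overline{d}_{k-1}$ and $\underline{d}_{k-1}$ equal the supremum and infimum of $\{d : d = J h,\ \underline{h}_k \le h \le \overline{h}_k\}$, i.e.\ of $Jh$ over the box $[\underline{h}_k,\overline{h}_k]$. This follows immediately from Lemma \ref{lem:tightness} with $A = J$ and $x = h$: that lemma states $\sup_{\underline{h}_k \le h \le \overline{h}_k} Jh = J^+\overline{h}_k - J^-\underline{h}_k = \overline{d}_{k-1}$ and $\inf = J^+\underline{h}_k - J^-\overline{h}_k = \underline{d}_{k-1}$, taken element-wise. \textbf{The main obstacle} I anticipate is conceptual rather than computational: carefully identifying the correct notion of the "set of compatible inputs" so that the box $[\underline{h}_k,\overline{h}_k]$ is exactly the image of that set under the bounding procedure, and confirming that the full-rank assumption on $\begin{bmatrix} G^\top & H^\top \end{bmatrix}^\top$ guarantees $J$ is a genuine left inverse so no compatible input is lost or spuriously added when passing through $J$. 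Once the box characterization is pinned down, correctness reduces to a sign-tracking bookkeeping exercise and tightness is a direct invocation of Lemma \ref{lem:tightness}.
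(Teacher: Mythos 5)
Your proposal is correct and follows essentially the same route as the paper's proof: stack the state and output equations to obtain $\underline{h}_k \leq \begin{bmatrix} G^\top & H^\top \end{bmatrix}^\top d_{k-1} \leq \overline{h}_k$ via Corollary \ref{cor:bounding}, exploit the full column rank of $\begin{bmatrix} G^\top & H^\top \end{bmatrix}^\top$ so that $J$ acts as a left inverse, apply Proposition \ref{prop:bounding} to recover the framers in \eqref{eq:dup}, and invoke Lemma \ref{lem:tightness} with $A=J$ for tightness. Your write-up merely makes explicit (via $h^\ast_k$ and the sign-tracking of subtracted terms) the steps the paper leaves compressed.
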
\vspace{-0.05cm}
\begin{proof}
Augmenting the state and output equations in \eqref{eq:system} and \yongs{from} Corollary \ref{cor:bounding}, \yongs{we obtain} 
$\underline{h}_k \leq \begin{bmatrix} G^\top & H^\top \end{bmatrix}^\top d_{k-1} \leq \overline{h}_k $, with $\underline{h}_k,\overline{h}_k$ defined in \eqref{eq:hup},\eqref{eq:hlow}. Then, the input framers in \eqref{eq:dup} can be obtained by 
using Propositions \ref{prop:affine abstractions}--\ref{prop:embedding} and considering the fact that $\begin{bmatrix} G^\top & H^\top \end{bmatrix}^\top $ is full rank. Finally, tightness is implied by Lemma \ref{lem:tightness} (where the $A$ matrix equals $J$). 
\end{proof}
\mohamm{Next, we address
the existence 
of correct framers}. 
\vspace{-0.05cm}  
\begin{thm}[Existence of Correct Framers] \label{thm:existence}
Consider the system \eqref{eq:system} \mohamm{and the} 
GSISIO introduced in \eqref{eq:xup}-\eqref{eq:dup}. Suppose all the assumptions in Lemma \ref{lem:framerd} hold and 
the observer gains are chosen \mohk{as follows:} 
\begin{align}
\nonumber \forall s &\in \{f,g,u,w,v,y \}: M_s= A_x^\dagger A_s,A_v=A_g,\\
\nonumber A_u &\triangleq  \begin{bmatrix} F^\top & F^\top \end{bmatrix}^\top,A_w \hspace{-.1cm}= A_f,A_g\triangleq  \begin{bmatrix}L_2 & -K_2 \\ -K_2 & L_2\end{bmatrix}, \\
\nonumber A_x&\triangleq  \begin{bmatrix}I-K_1 & L_1 \\ L_1 & I-K_1\end{bmatrix},A_f \triangleq \begin{bmatrix}I+L_1 & -K_1 \\ -K_1 & I+L_1\end{bmatrix},\\
\nonumber  L &\triangleq G^{-}J^{+}\hspace{-.1cm}+G^+J^{-},K \triangleq G^{-}J^{-}+G^+J^+, \\
\nonumber  K_1& \triangleq K_{1:n},K_2 \triangleq K_{n+1:n+l},L_1 \triangleq L_{1:n},L_2 \triangleq L_{n+1:n+l},\\
\nonumber F &\triangleq (I +L_1- K_1)B + (L_2-K_2)D.
\end{align}
Then, at each 
time step, the GSISIO returns finite and correct framers if  \mohamm{and only if}
\begin{align} \label{eq:existence}
{\rm rk}(I-K_1-L_1)={\rm rk}(I-K_2+L_2)=n.
\end{align} 
\end{thm}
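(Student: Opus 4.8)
The plan is to prove both implications of the rank equivalence, treating the state framers as the core (the input framers then inherit correctness and tightness directly from Lemma \ref{lem:framerd}, whose hypotheses are assumed). The first step, common to both directions, is to eliminate the unknown input. Stacking the state and output equations of \eqref{eq:system} at time $k-1$ gives $\begin{bmatrix}G^\top & H^\top\end{bmatrix}^\top d_{k-1} = \begin{bmatrix}x_k^\top & y_{k-1}^\top\end{bmatrix}^\top - \begin{bmatrix}f(x_{k-1})^\top & g(x_{k-1})^\top\end{bmatrix}^\top - \begin{bmatrix}B^\top & D^\top\end{bmatrix}^\top u_{k-1} - \begin{bmatrix}w_{k-1}^\top & v_{k-1}^\top\end{bmatrix}^\top$; since $\begin{bmatrix}G^\top & H^\top\end{bmatrix}^\top$ has full column rank, $J$ is a left inverse, so premultiplying by $GJ$, substituting into the state equation, and using $GJ = K - L$ together with the definitions of $K_1,K_2,L_1,L_2,F$ yields the exact identity $A_x\begin{bmatrix}x_k^\top & x_k^\top\end{bmatrix}^\top = A_f\begin{bmatrix}\phi^\top & \phi^\top\end{bmatrix}^\top + A_g\begin{bmatrix}\psi^\top & \psi^\top\end{bmatrix}^\top + A_w\begin{bmatrix}w_{k-1}^\top & w_{k-1}^\top\end{bmatrix}^\top + A_v\begin{bmatrix}v_{k-1}^\top & v_{k-1}^\top\end{bmatrix}^\top + A_u u_{k-1} + A_y y_{k-1}$, where $\phi = f(x_{k-1})$, $\psi = g(x_{k-1})$. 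In words, the true state solves the very linear system that the observer \eqref{eq:xup} solves, only with the vector fields and noises replaced by their exact values rather than their interval bounds from Corollary \ref{cor:bounding}.

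For the ``if'' direction I would proceed by induction on $k$, with base case $\underline x_0\le x_0\le\overline x_0$. Assuming $\underline x_{k-1}\le x_{k-1}\le\overline x_{k-1}$, Corollary \ref{cor:bounding} gives $\underline f_k\le\phi\le\overline f_k$ and $\underline g_k\le\psi\le\overline g_k$, and the noise bounds hold by assumption. The similarity $T = \begin{bmatrix}I & I\\I & -I\end{bmatrix}$ (with $T^2 = 2I$) block-diagonalizes $A_x$, so the nonsingularity of $A_x$ reduces to that of its two diagonal blocks, which is what the rank condition \eqref{eq:existence} encodes; under \eqref{eq:existence}, $A_x^\dagger = A_x^{-1}$, the framers are finite, and subtracting the true identity from \eqref{eq:xup} yields, for the error pair, a recursion $(I_{2n} - \mathcal M)\begin{bmatrix}\overline x_k - x_k\\x_k - \underline x_k\end{bmatrix} = \rho_k$, where $\mathcal M \triangleq \begin{bmatrix}K_1 & L_1\\L_1 & K_1\end{bmatrix}\ge 0$ and $\rho_k\ge 0$. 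The non-negativity of $\rho_k$ is where the block structure of $A_f,A_g,A_w,A_v$ and the non-negativity of $I+L_1,K_1,L_2,K_2$ (built into $K,L$ by the $\max/\min$ construction) are used: each contribution reduces to terms of the form $(I+L_1)(\overline q_k - q) + K_1(q - \underline q_k)$ or $L_2(\overline q_k - q) + K_2(q - \underline q_k)$ with $\overline q_k\ge q\ge\underline q_k$.

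The crux is then to pass from $(I_{2n} - \mathcal M)e_k = \rho_k\ge 0$ to $e_k\ge 0$, i.e.\ to show that $I_{2n} - \mathcal M$ is inverse-positive. Since $\mathcal M\ge 0$, $I_{2n}-\mathcal M$ is a $Z$-matrix, and it is a nonsingular $M$-matrix with $(I_{2n}-\mathcal M)^{-1} = \sum_{j\ge 0}\mathcal M^j\ge 0$ exactly when $\rho(\mathcal M) < 1$; I would establish this with the same congruence $T$, which splits the spectrum of $\mathcal M$ across the two diagonal blocks so that \eqref{eq:existence} rules out the critical eigenvalue, and then combine it with a contraction bound on those blocks that must be extracted from the pseudoinverse structure of $J$ to upgrade invertibility to $\rho(\mathcal M)<1$. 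I expect this inverse-positivity step — certifying $\rho(\mathcal M)<1$ from the structural rank condition rather than from mere invertibility — to be the main obstacle. Finally, for the ``only if'' direction I would contrapose: if \eqref{eq:existence} fails then one diagonal block, hence $A_x$, is singular, so along a direction of $\ker A_x$ the true $x_k$ is left undetermined by $\{y_{k-1},u_{k-1}\}$ and the available bounds; exhibiting an admissible realization of $d_{k-1}$ that moves $x_k$ arbitrarily in that direction shows that no finite interval can contain all compatible states, so finiteness and correctness together force \eqref{eq:existence}.
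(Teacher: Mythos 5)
Your preparatory steps are sound and essentially mirror the paper: the true-state identity obtained by premultiplying the stacked equations by $GJ$ (using $GJ=K-L$ and left-invertibility of $\begin{bmatrix}G^\top & H^\top\end{bmatrix}^\top$) is exactly the relation underlying the paper's implicit system \eqref{eq:framerx}, and your similarity transformation $T$ that block-diagonalizes $A_x$ into $I-K_1-L_1$ and $I-K_1+L_1$ is precisely what the paper extracts from the block centro-Hermitian structure via \cite[Corollary 4.7]{lu2002inverses}. The genuine gap is the step you yourself flag as the ``main obstacle,'' and it cannot be closed as you propose. Passing from $(I_{2n}-\mathcal M)e_k=\rho_k\ge 0$ to $e_k\ge 0$ requires $(I_{2n}-\mathcal M)^{-1}\ge 0$, which for the $Z$-matrix $I_{2n}-\mathcal M$ is equivalent to $\rho(\mathcal M)<1$; since $T$ splits the spectrum of $\mathcal M$ into that of $K_1+L_1$ and $K_1-L_1$, this means $\rho(K_1+L_1)<1$. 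The rank condition \eqref{eq:existence} only asserts that $1$ is not an eigenvalue of $K_1\pm L_1$, which is strictly weaker, and no ``contraction bound extracted from the pseudoinverse structure of $J$'' exists to bridge the two: writing $J_1$ for the columns of $J$ that multiply the state block, one has $K_1-L_1=GJ_1$, a block of the orthogonal projector $\begin{bmatrix}G^\top & H^\top\end{bmatrix}^\top J$ (hence norm-bounded by $1$), but $K_1+L_1=|G|\,|J_1|$, where sign cancellations present in $GJ_1$ are destroyed; choosing $G$ with large entries of mixed sign keeps \eqref{eq:existence} satisfied generically while driving $\rho(|G|\,|J_1|)$ far above $1$. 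So your ``if'' direction, as structured, proves a theorem with a strictly stronger hypothesis than the one stated.

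The paper avoids this wall because its proof never forms an error recursion at all. Correctness is obtained from the order-preserving bounding chain: Corollary \ref{cor:bounding} bounds $f,g$ over the previous interval, Proposition \ref{prop:bounding} bounds $Gd_{k-1}$, and Lemma \ref{lem:framerd} bounds $d_{k-1}=J\begin{bmatrix}G^\top & H^\top\end{bmatrix}^\top d_{k-1}$, so the framers are outer bounds by construction; the actual content of the theorem is \emph{finiteness}, which the paper settles with the generalized-inverse solution theory of \cite{james1978generalised}: the solution set of the implicit linear system $A_x\begin{bmatrix}\overline{x}_k^\top & \underline{x}_k^\top\end{bmatrix}^\top=p_k$ is the interval $A_x^\dagger p_k\pm\mu r$ in \eqref{eq:propsol1}, which is finite iff $r=0$, iff $A_x$ is nonsingular, iff \eqref{eq:existence} holds. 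Your ``only if'' direction also diverges from the paper's: the paper argues unboundedness of the solution set of the framer equations when $A_x$ is singular, whereas you promise to exhibit an admissible realization of $d_{k-1}$ that makes the set of \emph{compatible states} unbounded along $\ker A_x$ --- a further construction you have not carried out. In short, the skeleton (elimination of $d$, block diagonalization) matches the paper, but your correctness argument demands inverse-positivity that the theorem's hypothesis does not supply, and your converse replaces the paper's argument with an unproven one.
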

\begin{proof}
From the state equation in \ref{eq:system}, Corollary \ref{cor:bounding} and Proposition \ref{cor:bounding}, we have $\underline{x}_k \leq x_k \leq \overline{x}_k$, where, $\underline{x}_k =\underline{f}_k+Bu_{k-1}+\underline{w}+ G^+\underline{d}_{k-1}-G^{-}\overline{d}_{k-1}$, $\overline{x}_k =\overline{f}_k+Bu_{k-1}+\overline{w}+ G^+\overline{d}_{k-1}-G^{-}\underline{d}_{k-1}$, \mohamm{which. in addition to \eqref{eq:dup}--\eqref{eq:hlow}, results in} 
the following system of linear equations: 

\vspace*{-0.4cm} \begin{align} 
\begin{array}{ll}
 &\hspace{-.33cm}A_x\hspace{-.1cm} \begin{bmatrix} \overline{x}^{\top}_{k} & \underline{x}^{\top}_{k} \end{bmatrix}^\top \hspace{-.3cm}= \hspace{-.1cm}A_f \hspace{-.1cm}\begin{bmatrix} \overline{f}^{\top}_k &  \underline{f}^{\top}_k \end{bmatrix}^\top\hspace{-.2cm}+ \hspace{-.1cm}A_g\hspace{-.1cm}\begin{bmatrix} \overline{g}^{\top}_k &  \underline{g}^{\top}_k \end{bmatrix}^\top\hspace{-.2cm}+\hspace{-.1cm}A_u u_{k-1} \label{eq:framerx}\\
  &\hspace{0cm}+A_w \begin{bmatrix} \overline{w}^\top & \underline{w}^\top \end{bmatrix}^\top\hspace{-.1cm}+\hspace{-.1cm}A_v \begin{bmatrix} \overline{v}^\top &\underline{v}^\top \end{bmatrix}^\top\hspace{-.1cm}+\hspace{-.1cm}A_y y_{k-1} \moha{\triangleq p_k},
  \end{array}
\end{align}
with 
$\overline{q}_k,\underline{q}_k, \forall q \in \{f,g\}$ obtained from Corollary \ref{cor:bounding} with the corresponding interval $[\underline{x}_{k-1},\overline{x}_{k-1}]$. By \cite{james1978generalised}, the \moha{set of all solutions} of \eqref{eq:framerx} lies in \mohamm{the interval $[\underline{x}_k,\overline{x_k}]$, where} 
\begin{align} \label{eq:propsol1}
 \overline{x}_{k}=\overline{x}^{f}_k+\mu {r}, \quad \underline{x}_{k} = \underline{x}^{f}_k-\mu {r},
\end{align}   
\mohamm{with} $\mu$ being a large positive number (infinity), $\overline{x}^{f}_k\triangleq(A^\dagger_x \moha{p_k})_{1:n},\underline{x}^{f}_k\triangleq(A^\dagger_x \moha{p_k})_{n+1:2n}$, 
and $r_i=0$ if the $i$-th row of $A$ is zero and $r_i=1$ otherwise. 
Finally, the finiteness of $\overline{x}_k,\underline{x}_k$ \yongs{reduces} to $r=0$, which is \mohamm{\emph{equivalent}} to the rank conditions in \eqref{eq:existence} by the fact that $A_x$ is a block real centro-Hermitian matrix by its definition and \cite[Corollary 4.7]{lu2002inverses}.
\end{proof}
\mohamm{Although} we can 
only obtain a one-step delayed estimate of $d_k$ in \eqref{eq:dup}, \mohamm{finding  
an estimate for \emph{a \yongz{sub}component of $d_{k}$} at current time $k$, can be 
formalized as follows}. 
\begin{lem} \label{lem:partial-estimate}
Suppose all the assumptions in Theorem \ref{thm:existence} hold. Then, at time step $k$, \mohamm{the unknown input $d_k$ can always be decomposed into two components $d_{1,k}$ and $d_{2,k}$, where there exists matrices $T_1$ and $\Phi$ such that \syong{the framers for $d_{1,k}$ can be found as:}} 
\mohk{
\begin{align*} 
 \underline{d}_{1,k} \leq d_{1,k} \leq \overline{d}_{1,k}, 
\end{align*}
\mohamm{where} }
 \begin{align*}
 \overline{d}_{1,k} &= \Phi(z_{1,k}-T_1Du_k)+\overline{\ell}_k, \\
 \underline{d}_{1,k} &= \Phi(z_{1,k}-T_1Du_k)+\underline{\ell}_k, \\ 
 \overline{\ell}_k &\triangleq (\Phi T_1)^{-}(g_d(\overline{x}_k,\underline{x}_k)+\overline{v}) 
-(\Phi T_1)^{+}(g_d(\underline{x}_k,\overline{x}_k)+\underline{v}), \\ 
\underline{\ell}_k &\triangleq (\Phi T_1)^{-}(g_d(\underline{x}_k,\overline{x}_k)+\underline{v}) 
-(\Phi T_1)^{+}(g_d(\overline{x}_k,\underline{x}_k)+\overline{v}). 
\end{align*}
Moreover, 
$d_{2,k}$  cannot be \yongz{estimated} \yongz{at the current time $k$}. 
\end{lem}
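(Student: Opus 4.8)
The plan is to exploit a rank factorization of the (possibly rank-deficient) feedthrough matrix $H$ in order to separate the component of $d_k$ that leaves an observable footprint in $y_k$ from the one that does not. Concretely, I would start from a thin singular value decomposition $H = U_1 \Sigma V_1^\top$, where $\Sigma \in \mathbb{R}^{r \times r}$ is invertible with $r = {\rm rk}(H)$ and $\begin{bmatrix} V_1 & V_2 \end{bmatrix}$ is orthogonal with the columns of $V_2$ spanning the null space of $H$. I would then set $T_1 \triangleq U_1^\top$, $\Phi \triangleq \Sigma^{-1}$, and introduce the decomposition $d_k = V_1 d_{1,k} + V_2 d_{2,k}$ with $d_{1,k} \triangleq V_1^\top d_k$ and $d_{2,k} \triangleq V_2^\top d_k$. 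The two key algebraic facts driving everything are $T_1 H = \Sigma V_1^\top$ and $H V_2 = 0$.

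Next, I would left-multiply the output equation in \eqref{eq:system} by $T_1$. Since $T_1 H d_k = \Sigma V_1^\top d_k = \Sigma d_{1,k}$, this gives the exact identity $\Sigma d_{1,k} = T_1 y_k - T_1 g(x_k) - T_1 D u_k - T_1 v_k$, and hence, writing $z_{1,k} \triangleq T_1 y_k$, we obtain $d_{1,k} = \Phi(z_{1,k} - T_1 D u_k) - \Phi T_1 (g(x_k) + v_k)$, which already reproduces the leading term of the claimed framers for $d_{1,k}$.

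The remaining task is to bound the unknown residual $-\Phi T_1 (g(x_k) + v_k)$. Using Proposition \ref{prop:embedding} (through Corollary \ref{cor:bounding}) together with the noise bounds yields $g_d(\underline{x}_k, \overline{x}_k) + \underline{v} \le g(x_k) + v_k \le g_d(\overline{x}_k, \underline{x}_k) + \overline{v}$. I would then apply Proposition \ref{prop:bounding} with the matrix $-\Phi T_1$, invoking the elementary identities $(-M)^+ = M^-$ and $(-M)^- = M^+$, so that the resulting upper and lower bounds collapse exactly onto the stated $\overline{\ell}_k$ and $\underline{\ell}_k$, establishing $\underline{d}_{1,k} \le d_{1,k} \le \overline{d}_{1,k}$. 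For the final claim, the non-estimability of $d_{2,k}$ follows structurally from $H V_2 = 0$: the measurement $y_k$ is identically invariant to $d_{2,k}$, so no function of $y_k$ and the other quantities known at time $k$ can constrain $d_{2,k}$; it becomes estimable only one step later through its influence on $x_{k+1}$ via $G$, which is exactly the delayed estimate \eqref{eq:dup}.

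The main obstacle I anticipate lies not in the bounding, which is routine once Propositions \ref{prop:bounding} and \ref{prop:embedding} are in hand, but in pinning down the decomposition so that $T_1$ and $\Phi$ are unambiguously well defined (invertibility of $\Sigma$ and orthogonality of the $V$-factor) and in matching the sign conventions exactly when passing through $-\Phi T_1$; a careless application of the $+/-$ operator splitting is the most likely source of error. A secondary subtlety is to phrase the claim on $d_{2,k}$ as a genuine structural non-identifiability statement rather than a mere limitation of this particular estimator.
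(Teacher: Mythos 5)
Your proposal is correct and follows essentially the same route as the paper's proof: a singular value decomposition of $H$ with $T_1 = U_1^\top$, $\Phi = \Sigma^{-1}$, the decomposition $d_k = V_1 d_{1,k} + V_2 d_{2,k}$, bounding via Propositions \ref{prop:bounding} and \ref{prop:embedding}, and the observation that $d_{2,k}$ leaves no footprint in the output. Your treatment is in fact slightly more explicit than the paper's in verifying the sign identities $(-M)^+ = M^-$, $(-M)^- = M^+$ needed to match $\overline{\ell}_k$ and $\underline{\ell}_k$ exactly.
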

\begin{proof}
 Let $p_{H}\triangleq {\rm rk} (H)$. Similar to \cite{yong2018simultaneous}, by applying singular value decomposition, we have $H= \begin{bmatrix}U_{1}& U_{2} \end{bmatrix} \begin{bmatrix} \Sigma & 0 \\ 0 & 0 \end{bmatrix} \begin{bmatrix} V_{1}^{\, \top} \\ V_{2}^{\, \top} \end{bmatrix}$ with $V_{1}\hspace{-.1cm} \in \hspace{-.1cm}\mathbb{R}^{p \times p_{H}}$, $V_{2} \hspace{-.1cm}\in\hspace{-.1cm} \mathbb{R}^{p \times (p-p_{H})}$, $\Sigma \hspace{-.1cm}\in \hspace{-.1cm} \mathbb{R}^{p_{H} \times p_{H}}$ (a diagonal matrix of full rank),
 $U_{1} \in \mathbb{R}^{l \times p_{H}}$ and $U_{2} \in \mathbb{R}^{l \times (l-p_{H})}$. 
Then, since 
$V\triangleq [ V_{1} \ V_{2}]$ is unitary, 
$d_k =V_{1} d_{1,k}+V_{2} d_{2,k}, d_{1,k}=V_{1}^\top d_k, d_{2,k}=V_{2}^\top d_k$.  
 Moreover, by defining $T_1\triangleq U^\top_1,T_2 \triangleq U^\top_2$, the output equation can be decoupled as: 
 $z_{1,k}= g_1(x_k) + D_1u_k + v_{1,k}+\Sigma d_{1,k}$ and 
  $z_{2,k}= g_2(x_k) + D_2u_k + v_{2,k}$, where 
  $g_1(x_k)\triangleq T_1g(x_k),g_2(x_k)\triangleq T_2g(x_k)$. 
The bounds for $d_{1,k}$ \yongs{can be} obtained by applying Propositions \ref{prop:bounding} and \ref{prop:embedding} to the first equation \mohamm{and setting $\Phi=\Sigma^{-1}$}. 
Finally, since $d_{2,k}$ does not appear in any of the equations, 
it cannot be estimated at the current time.  
\end{proof}
\begin{rem} \label{rem:partestimate}
The result in Lemma \ref{lem:partial-estimate} is particularly helpful in the special case when the feedthrough matrix has full rank, so $d_k=d_{1,k}$ and hence, $d_k$ can be estimated at {current time} $k$, which is an alternative approach \yongs{to} the one in \cite{khajenejad2020simultaneous}. 
\end{rem} \vspace{-.25cm}
\subsection{\moham{Uniform} Boundedness of Estimates (Observer Stability)}
In this section, 
we 
\yongz{investigate} the stability of \yongn{GSISIO}. 
\begin{thm}[
Observer Stability] \label{thm:boundedness}
Consider the system \eqref{eq:system} and the GSISIO \eqref{eq:xup}--\eqref{eq:dup}. Suppose 
all the assumptions in Theorem \ref{thm:existence} hold and the decomposition functions $f_d, g_d$ are constructed using \eqref{eq:decompconstruct}, \mohamm{with their corresponding $L_{f_d},L_{g_d}$ given in Lemma \ref{lem:lip-dec}}. 
Then, the observer is stable, in the sense that interval width sequences $\{\|\Delta^d_{k-1}\| \triangleq \|\overline{d}_{k-1}-\underline{d}_{k-1}\|, \|\Delta^x_k\| \triangleq \|\overline{x}_{k}-\underline{x}_{k}\|\}_{k=1}^{\infty}$ 
and estimation errors $\{\|\tilde{d}_{k-1}\| \triangleq \max (\|d_{k-1}-\underline{d}_{k-1}\|,\|\overline{d}_{k-1}-{d}_{k-1}\|), \|\tilde{x}_{k}\| \triangleq \max (\|x_{k}-\underline{x}_{k}\|,\|\overline{x}_{k}-{x}_{k}\|) \}_{k=1}^{\infty}$ are uniformly bounded, if \yong{either one} of the following 
conditions hold: \vspace{0.1cm}
\renewcommand{\theenumi}{\roman{enumi}}
\begin{enumerate}[(i)]
\item ${\mathcal{L}} \triangleq L_{f_d}\|{T}_f\|+L_{g_d}\|{T}_g\| \leq 1$,  \label{item:first} \vspace{0.1cm}
\item $ {\mathcal{T}}\triangleq  \begin{bmatrix} Q & 0 &0&0&0 \\ * & {T}^\top_g {T}_g & {T}^\top_g{T}_f& {T}^\top_g{T}_f&{T}^\top_g {T}_g \\ * & * & {T}^\top_f{T}_f & {T}^\top_f{T}_f & {T}^\top_f{T}_g \\ * & * & * &0 & {T}^\top_f{T}_g\\ * & * & * & * & 0 \end{bmatrix}\preceq 0$,\label{item:third}
\item \label{item:second} There exists 
$P \succ 0$ and $\Gamma \succeq 0$ in $\mathbb{R}^{n \times n}$ 
such that \\ $\mathcal{P} \triangleq \begin{bmatrix} P+\Gamma-I & 0 & P \\ 0 & \mathcal{L}^2I-P & 0 \\ P & 0 & P \end{bmatrix} \preceq 0$, 
\end{enumerate}
\mohk{
\syong{with} 
\begin{align*}
T_f &\triangleq (I -K_1 - L_1)^{\dagger}(I -K_1 + L_1), \\
 T_g &\triangleq (I-K_1-L_1)^{\dagger}(K_2+L_2), \\
Q &\triangleq \lambda_{\max}({T}^\top_f {T}_f)L^2_{f_d}+\lambda_{\max}({T}^\top_g {T}_g)L^2_{g_d}-1, 
\end{align*}
\syong{and} $K_1, K_2,L_1, 
L_2$ are given in Theorem \ref{thm:existence}.}
\end{thm}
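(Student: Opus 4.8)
The plan is to reduce observer stability to the dynamics of a single scalar sequence, the state interval width $\|\Delta^x_k\|$, to show that under each of (i)--(iii) it obeys a stable comparison recursion, and then to recover boundedness of the input widths and of both estimation errors as corollaries. The first step is to extract a closed-form recursion for $\Delta^x_k\triangleq\overline{x}_k-\underline{x}_k$ from the framer system \eqref{eq:framerx}. I would reuse the observation from Theorem~\ref{thm:existence} that $A_x$, $A_f$, $A_g$ (and $A_w=A_f$, $A_v=A_g$) are block centro-Hermitian, so the orthogonal change of basis $\tfrac{1}{\sqrt 2}\begin{bmatrix} I & I\\ I & -I\end{bmatrix}$ block-diagonalizes each into a ``sum'' block $P+Q$ and a ``difference'' block $P-Q$. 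Subtracting the lower block-row of \eqref{eq:framerx} from the upper one isolates the difference blocks: the left-hand side becomes $(I-K_1-L_1)\Delta^x_k$, the terms $A_u u_{k-1}$ and $A_y y_{k-1}$ cancel because their two block-rows coincide, and the surviving right-hand side acts only on $\Delta^f_k\triangleq\overline{f}_k-\underline{f}_k$, $\Delta^g_k\triangleq\overline{g}_k-\underline{g}_k$ and the constant noise widths $\overline{w}-\underline{w}$, $\overline{v}-\underline{v}$. Because the rank condition \eqref{eq:existence} makes $I-K_1-L_1$ left-invertible, multiplying by its pseudoinverse yields $\Delta^x_k=T_f\Delta^f_k+T_g\Delta^g_k+c$, where $c$ is a fixed vector depending only on the noise bounds and $T_f,T_g$ are the matrices in the statement.

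The second ingredient is the Lipschitz-like bound of Lemma~\ref{lem:lip-dec}: since $\overline{f}_k,\underline{f}_k$ and $\overline{g}_k,\underline{g}_k$ are the bounds computed on $[\underline{x}_{k-1},\overline{x}_{k-1}]$, we have $\|\Delta^f_k\|\le L_{f_d}\|\Delta^x_{k-1}\|$ and $\|\Delta^g_k\|\le L_{g_d}\|\Delta^x_{k-1}\|$. Substituting into the width recursion and using the triangle inequality gives $\|\Delta^x_k\|\le \mathcal{L}\,\|\Delta^x_{k-1}\|+\|c\|$ with $\mathcal{L}=L_{f_d}\|T_f\|+L_{g_d}\|T_g\|$, which is exactly condition~(i); a non-expansive scalar recursion with constant forcing is input-to-state stable, so $\{\|\Delta^x_k\|\}$ is uniformly bounded.

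Conditions~(ii) and~(iii) are the less conservative LMI certificates, and I would obtain both by replacing the triangle inequality with a quadratic (Lyapunov) argument. Squaring the width recursion and treating the Lipschitz bounds as the quadratic constraints $\|\Delta^f_k\|^2\le L_{f_d}^2\|\Delta^x_{k-1}\|^2$ and $\|\Delta^g_k\|^2\le L_{g_d}^2\|\Delta^x_{k-1}\|^2$, the S-procedure recasts ``$\|\Delta^x_k\|^2\le\|\Delta^x_{k-1}\|^2$'' as a single aggregated quadratic form in $(\Delta^x_{k-1},\Delta^f_k,\Delta^g_k,\dots)$ whose matrix is precisely $\mathcal{T}$; here the scalars $\lambda_{\max}(T_f^\top T_f)$ and $\lambda_{\max}(T_g^\top T_g)$ entering $Q$ are the S-procedure multipliers on the two Lipschitz constraints, so $\mathcal{T}\preceq 0$ certifies non-increase of $\|\Delta^x_k\|^2$. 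For~(iii) I would instead take the quadratic Lyapunov candidate $V_k=(\Delta^x_k)^\top P\Delta^x_k$, introduce the multiplier $\Gamma\succeq 0$ to enforce the scalar gain $\mathcal{L}$, and apply a Schur-complement rearrangement; the requirement $V_k-V_{k-1}\le 0$ (with the bounded noise forcing absorbed) is then implied by $\mathcal{P}\preceq 0$. In all three cases the state width sequence is uniformly bounded.

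It remains to bound the input widths and the estimation errors. From \eqref{eq:dup} together with $N_{11}=N_{22}=J^+$, $N_{12}=N_{21}=-J^-$ (Lemma~\ref{lem:framerd}) we get $\Delta^d_{k-1}=|J|(\overline{h}_k-\underline{h}_k)$, and \eqref{eq:hup}--\eqref{eq:hlow} give $\overline{h}_k-\underline{h}_k=\big[(\Delta^x_k+\Delta^f_k+\overline{w}-\underline{w})^\top\ (\Delta^g_k+\overline{v}-\underline{v})^\top\big]^\top$, so uniform boundedness of $\|\Delta^x_k\|$, via Lemma~\ref{lem:lip-dec}, immediately bounds $\|\Delta^d_{k-1}\|$. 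Finally, correctness (Lemma~\ref{lem:framerd} and Theorem~\ref{thm:existence}) gives $\underline{x}_k\le x_k\le\overline{x}_k$ and $\underline{d}_{k-1}\le d_{k-1}\le\overline{d}_{k-1}$, whence $\|\tilde{x}_k\|\le\|\Delta^x_k\|$ and $\|\tilde{d}_{k-1}\|\le\|\Delta^d_{k-1}\|$, completing the argument. The main obstacle is the very first step: performing the centro-Hermitian block-diagonalization carefully enough to verify that the known-input and measurement terms genuinely cancel and that the surviving state-to-width gains are exactly $T_f$ and $T_g$. Once that recursion is in hand, the three stability certificates are standard small-gain and S-procedure arguments layered on the bound of Lemma~\ref{lem:lip-dec}.
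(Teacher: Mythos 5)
Your proposal is correct and follows essentially the same route as the paper's own proof: you extract the width recursion $\Delta^x_k = T_f\,\Delta^f_k + T_g\,\Delta^g_k + \Delta z$ from the block (centro-Hermitian) structure of \eqref{eq:framerx} under the rank condition \eqref{eq:existence} (the paper obtains the same relation, stated as inequality \eqref{eq:deltax}, from the pseudoinverse solution \eqref{eq:propsol1}), then prove (i) by the small-gain/triangle-inequality argument and (ii)--(iii) by exactly the paper's Lyapunov choices $V_k=\Delta^{s\top}_k\Delta^{s}_k$ and $V_k=\Delta^{s\top}_k P\Delta^{s}_k$, with the Lipschitz bounds of Lemma \ref{lem:lip-dec} absorbed as quadratic (Rayleigh-quotient/S-procedure-type) constraints to produce $\mathcal{T}\preceq 0$ and $\mathcal{P}\preceq 0$. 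The only cosmetic differences are that the paper passes through the Comparison Lemma to a majorant sequence $\Delta^s_k$ before running the Lyapunov arguments, and that your closing bounds on $\|\Delta^d_{k-1}\|$ via $|J|(\overline{h}_k-\underline{h}_k)$ and on the estimation errors via correctness are handled in the paper by Lemma \ref{lem:convergence} and the correctness results rather than inside this proof.
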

\begin{proof}
Let $\Delta^x_k \triangleq \overline{x}_k-\underline{x}_k$ and $\Delta x^{f}_k\triangleq \overline{x}^f_k-\underline{x}^f_k$. Then, by \eqref{eq:propsol1}, $\Delta^x_k=\Delta x^{f}_k+2\mu r$ and since \eqref{eq:existence} holds, then $\Delta^x_k=\Delta x^{f}_k$.  
 On the other hand, from \eqref{eq:framerx} and Corollary \ref{cor:bounding}, we obtain 
\begin{align}\label{eq:deltax}
\Delta x_k=\Delta x^{f}_{k} \leq \Delta \tilde{f}^x_{k-1}+\Delta z,
\end{align}
where $\Delta \tilde{f}^x_k \triangleq T_f\Delta f^x_k\hspace{-0.1cm}+T_g\Delta g^x_k$, $\Delta f^x_k \triangleq f_d(\overline{x}_k,\underline{x}_k)-f_d(\underline{x}_k,\overline{x}_k)$, $\Delta g^x_k \triangleq g_d(\overline{x}_k,\underline{x}_k)-g_d(\underline{x}_k,\overline{x}_k)$, $\Delta z \triangleq T_f\Delta w+T_g\Delta v$, $\Delta w \triangleq \overline{w}-\underline{w}$, $\Delta v \triangleq \overline{v}-\underline{v}$, $T_f \triangleq (I\hspace{-0.05cm}-\hspace{-0.05cm}K_1\hspace{-0.05cm}-\hspace{-0.05cm}L_1)^{\dagger}(I\hspace{-0.05cm}-\hspace{-0.05cm}K_1\hspace{-0.05cm}+\hspace{-0.05cm}L_1)$ and $T_g \triangleq(I\hspace{-0.05cm}-\hspace{-0.05cm}K_1\hspace{-0.05cm}-\hspace{-0.05cm}L_1)^{\dagger}(K_2\hspace{-0.05cm}+\hspace{-0.05cm}L_2)$.\\[-0.35cm]  

\noindent\textbf{
Condition (i):}
 By Assumption \ref{assumption:mix-lip}, triangle inequality and \eqref{eq:deltax}: 
$\|\Delta^x_{k}\| \leq \mathcal{L} \|\Delta^x_{k-1}\|+\|\Delta z\|$,
with $\mathcal{L} \triangleq L_{f_d}\|T_f\|+L_{g_d}\|T_g\|$ and $L_{f_d},L_{g_d}$ obtained from Lemma \ref{lem:lip-dec}. Since $\mathcal{L} \leq 1$ (by 
Condition (i)), 
the \yongs{sequence} 
$\{\|\Delta^x_k\|\}_{k=0}^{\infty}$ is uniformly bounded. Therefore, the interval width dynamics is stable.\\[-0.35cm] 

\noindent\textbf{
Condition (ii):}
 By the \emph{Comparison Lemma} \cite[Lemma 3.4]{khalil2002nonlinear} and non-negativity of $\Delta^x_k$, to show the stability of the system in \eqref{eq:deltax}, it suffices to show the uniform boundedness of $\{\Delta^s_k\}_{k=0}^{\infty}$, where
$\Delta^s_{k} = \Delta \tilde{f}^s_{k-1}+\Delta z, \ \Delta^s_{0}=\Delta^x_{0}$.
\mohamm{To do so}, consider a candidate Lyapunov function $V_k=\Delta^{s\top}_k\Delta^s_k$ 
that can be shown \mohamm{to satisfy} $\Delta V_k \triangleq V_{k+1}-V_k \leq \Psi^{\top}_k {\mathcal{T}}\Psi_k $, with $\Psi_k \triangleq \begin{bmatrix} \Delta^{s\top}_k & \Delta v^\top &\Delta w^\top & \Delta f^{s\top}_k \Delta g^{s\top}_k\end{bmatrix}^\top$ and ${\mathcal{T}}$ defined in the statement of the theorem, as follows: 
$\Delta V_k
 \leq (\lambda_{\max}({T}^\top_f {T}_f)L^2_{f_d}+\lambda_{\max}({T}^\top_g {T}_g)L^2_{g_d}-1)\Delta^{s\top}_k\Delta^s_k
 +\Delta v^\top {T}^\top_g {T}_g \Delta v+\Delta w^\top {T}^\top_f{T}_f \Delta w+2(\Delta f^{s\top}_k{T}^\top_f {T}_g \Delta g^s_k 
+\Delta f^{s\top}_k {T}^\top_f {T}_g \Delta v+\Delta f^{s\top}_k {T}^\top_f {T}_f\Delta w+\Delta g^{s\top}_k {T}^\top_g {T}_g \Delta v
+\Delta g^{s\top}_k {T}^\top_g {T}_f \Delta w+\Delta v^\top {T}^\top_g {T}_f \Delta w)=\Psi^{\top}_k {\mathcal{T}}\Psi_k$,
 where the inequality holds because $\Delta f^{s\top}_k\Delta f^s_k=\|\Delta f^s_k\|^2 \leq L^2_{f_d} \|\Delta^s_k\|^2$ (and similarly for $\Delta g^{s\top}_k\Delta g^s_k$) by Lemma \ref{lem:lip-dec} and $\Delta g^{s\top}_k{T}^\top_g {T}_g \Delta g^s_k\leq \lambda_{\max}({T}^\top_g {T}_g)\Delta g^{s\top}_k\Delta g^s_k=\lambda_{\max}({T}^\top_g {T}_g)\|\Delta g^s_k\|^2 \leq L^2_{g_d}\lambda_{\max}({T}^\top_g {T}_g) \|\Delta^s_k\|^2$ by using the \emph{Rayleigh Quotient} and Lemma \ref{lem:lip-dec}. Now, by the Lyapunov Theorem, stability is satisfied if ${\mathcal{T}} \preceq 0$.\\[-0.35cm]  

\noindent\textbf{
Condition (iii):}
Similarly, we consider a {candidate} 
Lyapunov function $V_k=\Delta^{s\top}_k P \Delta^{s}_k$, where $P \succ 0$, which can be shown to satisfy $\Delta V_k \triangleq V_{k+1}-V_k \leq 0$, 
as follows. 
Let $\hat{\Delta} \eta_k \triangleq \begin{bmatrix}{\Delta} \tilde{f}^{s\top}_k & \Delta^{s\top}_k & {\Delta} z^\top \end{bmatrix}^\top$ and 
note that  
${\Delta} \tilde{f}^{s\top}_k\Lambda {\Delta}  \tilde{f}^s_k \le {\Delta}  \tilde{f}^{s\top}_k {\Delta}  \tilde{f}^s_k \leq \mathcal{L}^2{\Delta}^{s\top}_k {\Delta}^s_k$, where the inequalities hold by {choosing $\Gamma$ such that} $\Gamma \triangleq I -\Lambda \succeq 0$ and Lemma \ref{lem:lip-dec}, respectively. Hence, $\mathcal{L}^2\Delta^{s\top}_k \Delta^s_k-{\Delta} \tilde{f}^{s\top}_k\Lambda {\Delta}  \tilde{f}^s_k \ge 0$. Then, inspired by a trick used in \cite[Proof of Theorem 1]{delshad2016robust}, to satisfy $\Delta V_k \le 0$, it suffices to guarantee that $\tilde{V}_k \triangleq \Delta V_k +\mathcal{L}^2\Delta^{s\top}_k \Delta^s_k-{\Delta}  \tilde{f}^{s\top}_k\Lambda {\Delta} \tilde{f}^s_k = \Delta V_k +\mathcal{L}^2\Delta^{s\top}_k \Delta^s_k-{\Delta}  \tilde{f}^{s\top}_k(I-{\Gamma}) {\Delta}  \tilde{f}^s_k \le 0$, where 
$\tilde{V}_k ={\Delta}  \tilde{f}^{s\top}_k P {\Delta}  \tilde{f}^{s}_k \hspace{-0.1cm}+\hspace{-0.1cm}{\Delta} z^\top P {\Delta} z+\hspace{-0.1cm}2{\Delta} z^\top P{\Delta}  \tilde{f}^s_k\hspace{-0.1cm}-\hspace{-0.1cm}\Delta^{s\top}_k P \Delta^s_k
+\mathcal{L}^2\Delta^{s\top}_k \Delta^s_k-{\Delta}  \tilde{f}^{s\top}_k (I-{\Gamma}) {\Delta}  \tilde{f}^s_k
= {\Delta}  \tilde{f}^{s\top}_k(P+\Gamma-I){\Delta} \tilde{f}^s_k+\Delta^{s\top}_k(\mathcal{L}^2I-P) \Delta^s_k
+{\Delta} {z}^\top P {\Delta} z+2{\Delta} {z}^\top P \Delta  \tilde{f}^s_k ={\Delta} \eta^\top_k \mathcal{P} {\Delta} \eta_k \le 0$.
\end{proof}
\moh{Finally, we will provide upper bounds for the interval widths and compute their \yong{steady-state values, if they exist}.}
\begin{lem}[Upper Bounds of the Interval Widths and their Convergence]\label{lem:convergence}
Consider the system \eqref{eq:system} and the GSISIO observer \eqref{eq:xup}--\eqref{eq:dup}. Suppose all assumptions in Theorem \ref{thm:existence} \moha{hold}. 
Then, \mohamm{there exist uniformly bounded upper sequences $\{\delta^x_{k},\delta^d_{k-1}\}_{k=1}^{\infty}$ for} interval width sequences $\{\|\Delta^x_{k}\|,\|\Delta^d_{k-1}\|\}_{k=1}^{\infty}$, \mohamm{which can be computed} 
\moha{as follows}:
\begin{align*}
 &\|\Delta^x_{k}\|\hspace{-.1cm} \leq \hspace{-.1cm}\delta^x_{k} \hspace{-.1cm}= \hspace{-.1cm} \mathcal{{L}}^k \delta_0^x \hspace{-.1cm}+\hspace{-.1cm} \|{\Delta} z\| \hspace{-.1cm}\left(\hspace{-.1cm}\frac{1-{\mathcal{L}}^k}{1-{\mathcal{L}}} \hspace{-.1cm}\right)\hspace{-.1cm}, \|\Delta^d_{k-1}\|\hspace{-.1cm} \leq \hspace{-.1cm}\delta^d_{k-1} \hspace{-.1cm}=\hspace{-.1cm} \mathcal{G}(\delta^x(k)\hspace{-.05cm})\hspace{-0.05cm}
\end{align*}
where 
$\mathcal{G}(x)\triangleq ((1+L_{f_d}) \|\hat{J}_1\|+L_{g_d}\|\hat{J}_2\|)x+ \|\hat{J}_1 \Delta w+\hat{J}_2 \Delta v\|$, $\Delta z = {T}_f\Delta w+{T}_g\Delta v$, $\Delta w \triangleq \overline{w}-\underline{w}$, $\Delta v \triangleq \overline{v}-\underline{v}$, $\hat{J} \triangleq \begin{bmatrix} \hat{J}_1&  \hat{J}_2 \end{bmatrix}\triangleq J^++J^{++}$ and $L_{f_d}$, $L_{g_d}$, ${T}_f ,{T}_g$ are given in Lemma \ref{lem:lip-dec} and Theorem \ref{thm:boundedness}. Furthermore, if {Condition} \eqref{item:first} in Theorem \ref{thm:boundedness} holds {with strict inequality}, then the upper bound sequences 
converge to steady-state values as \yongz{follows}: 
\begin{align*}
\overline{\delta}^x &\triangleq \lim_{k \to \infty} \delta^x_{k} = \|\Delta z\|\frac{\mathcal{L}}{1-\mathcal{L}}, \quad \overline{\delta}^d \triangleq \lim_{k \to \infty} \delta^d_{k} =\mathcal{G}(\overline{\delta}^x).
\end{align*}
\moh{\yong{On the other hand}, if \yongn{Condition} \eqref{item:third} or \eqref{item:second} in Theorem \ref{thm:boundedness} hold, then the interval widths $\|\Delta^x_{k}\|$ and $\|\Delta^d_{k}\|$ are uniformly bounded by \yong{$\min\{\|\Delta^x_{0}\|,\Delta^P_0\}$ and $\min\{\mathcal{G}(\|\Delta^x_{0}\|),\mathcal{G}((\Delta^P_0)\}$,} 
respectively, \yong{with $\Delta^P_0 \triangleq \min \limits_{P \in \mathbb{P}} \sqrt{\frac{(\Delta^x_{0})^\top P \Delta^x_{0}}{\lambda_{\min}(P)}}$, where $\mathbb{P}$ is the set of all $P$ that} solve the LMI in \yongn{Condition} \eqref{item:second}.} 
\end{lem}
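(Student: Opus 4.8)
The plan is to bound the two width sequences separately, reusing the dissipation estimates already established in the proof of Theorem~\ref{thm:boundedness}. First I would recall the scalar recursion $\|\Delta^x_k\| \le \mathcal{L}\|\Delta^x_{k-1}\| + \|\Delta z\|$, obtained under Condition~\eqref{item:first} by applying the triangle inequality to \eqref{eq:deltax} together with the Lipschitz-like estimate of Lemma~\ref{lem:lip-dec}. Defining $\delta^x_k$ through the \emph{same} affine recursion taken with equality and $\delta^x_0 \triangleq \|\Delta^x_0\|$, a one-line induction gives $\|\Delta^x_k\| \le \delta^x_k$ for all $k$. Since this is a first-order linear difference equation with constant coefficient $\mathcal{L}$ and constant forcing $\|\Delta z\|$, solving it in closed form via a geometric sum yields precisely the stated expression $\delta^x_k = \mathcal{L}^k \delta^x_0 + \|\Delta z\|(1-\mathcal{L}^k)/(1-\mathcal{L})$.

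Next I would derive the input-width bound. Substituting the gains $N_{11}=N_{22}=J^+$ and $N_{12}=N_{21}=-J^-$ from Lemma~\ref{lem:framerd} into \eqref{eq:dup} and subtracting yields $\Delta^d_{k-1}=\overline{d}_{k-1}-\underline{d}_{k-1}=(J^++J^-)(\overline{h}_k-\underline{h}_k)=\hat{J}(\overline{h}_k-\underline{h}_k)$ with $\hat{J}=|J|$, a nonnegative matrix. Expanding $\overline{h}_k-\underline{h}_k$ from \eqref{eq:hup}--\eqref{eq:hlow} isolates the column blocks $\hat{J}_1,\hat{J}_2$ acting on $\Delta^x_k+(\overline{f}_k-\underline{f}_k)+\Delta w$ and on $(\overline{g}_k-\underline{g}_k)+\Delta v$, respectively. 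Taking $2$-norms, applying the triangle inequality, bounding $\|\overline{f}_k-\underline{f}_k\|\le L_{f_d}\|\Delta^x_{k-1}\|$ and $\|\overline{g}_k-\underline{g}_k\|\le L_{g_d}\|\Delta^x_{k-1}\|$ via Lemma~\ref{lem:lip-dec}, and then replacing the state widths by their upper bound gives $\|\Delta^d_{k-1}\|\le \mathcal{G}(\delta^x_k)$ with $\mathcal{G}$ as defined in the statement. I expect the reconciliation of the time indices to be the main obstacle here: the decomposition-function widths are evaluated on $[\underline{x}_{k-1},\overline{x}_{k-1}]$ and are therefore governed by $\|\Delta^x_{k-1}\|$, whereas the direct term carries $\|\Delta^x_k\|$, so one must argue that a single $\delta^x_k$ may legitimately be inserted into $\mathcal{G}$ (e.g.\ by using monotonicity of $\{\delta^x_k\}$ in the pre-steady-state regime, or by dominating both widths by a common envelope).

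For convergence under Condition~\eqref{item:first} with the strict inequality $\mathcal{L}<1$, I would simply let $k\to\infty$ in the closed form: the term $\mathcal{L}^k$ vanishes, leaving the stated steady-state width $\overline{\delta}^x$, and continuity of the affine map $\mathcal{G}$ then yields $\overline{\delta}^d=\mathcal{G}(\overline{\delta}^x)$.

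Finally, under Condition~\eqref{item:third} or \eqref{item:second}, I would exploit the Lyapunov functions already constructed in Theorem~\ref{thm:boundedness} for the comparison system $\Delta^s_k$, whose trajectory dominates $\Delta^x_k$ element-wise by the Comparison Lemma and hence also in $2$-norm, both vectors being nonnegative. Since $\Delta V_k\le 0$, we have $V_k\le V_0$. With $V_k=\Delta^{s\top}_k P\Delta^s_k$ (Condition~\eqref{item:second}), the Rayleigh quotient gives $\lambda_{\min}(P)\|\Delta^s_k\|^2 \le V_0=(\Delta^x_0)^\top P\Delta^x_0$, so $\|\Delta^x_k\|\le \Delta^P_0$, and minimizing over all feasible $P\in\mathbb{P}$ produces the tightest such bound. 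With $P=I$ (Condition~\eqref{item:third}) the same argument gives $\|\Delta^x_k\|\le\|\Delta^x_0\|$. Taking the smaller of the two available bounds yields $\min\{\|\Delta^x_0\|,\Delta^P_0\}$, and applying the monotonically increasing map $\mathcal{G}$ (whose coefficients are nonnegative) transfers this bound to the input widths as $\min\{\mathcal{G}(\|\Delta^x_0\|),\mathcal{G}(\Delta^P_0)\}$, completing the argument.
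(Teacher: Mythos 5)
Your proposal is correct and follows essentially the same route as the paper's proof: unroll the scalar recursion $\|\Delta^x_k\|\le \mathcal{L}\|\Delta^x_{k-1}\|+\|\Delta z\|$ into a geometric sum, obtain $\Delta^d_{k-1}=\hat{J}(\overline{h}_k-\underline{h}_k)$ with $\hat{J}=|J|$ from \eqref{eq:dup}--\eqref{eq:hlow} and Lemma~\ref{lem:lip-dec} to arrive at $\mathcal{G}$, and reuse the Lyapunov decrease $V_k\le V_0$ from Theorem~\ref{thm:boundedness} for Conditions~\eqref{item:third} and \eqref{item:second}; in fact you spell out steps (the gain substitution, the Rayleigh-quotient bound, the $k$ versus $k-1$ index reconciliation) that the paper compresses into ``the rest of the proof is similar to \cite[Lemma 2]{khajenejad2020simultaneous}.'' One caveat: letting $k\to\infty$ in the stated closed form for $\delta^x_k$ yields $\|\Delta z\|/(1-\mathcal{L})$ rather than the quoted $\|\Delta z\|\,\mathcal{L}/(1-\mathcal{L})$ (the latter does not equal the fixed point of the recursion either), so your claim that the limit equals ``the stated steady-state width'' inherits an inconsistency internal to the paper's statement rather than reflecting a defect in your argument.
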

\begin{proof}
Applying $\|\Delta^x_{k}\| \leq \mathcal{L} \|\Delta^x_{k-1}\|+\|\Delta z\|$ repeatedly, 
yields
\begin{align*}
\|\Delta^x_{k}\|\hspace{-.05cm}\leq\hspace{-.05cm} {\mathcal{L}}^k\|\Delta^x_0\|\hspace{-.05cm}+\hspace{-.05cm}\textstyle{\sum}_{i=0}^{k-1}{\mathcal{L}}^{k-i}\|{\Delta} z\|\hspace{-.05cm}= \hspace{-.05cm}{\mathcal{L}}^k \delta_0^x \hspace{-.05cm}+\hspace{-.05cm} \|{\Delta} z\| \frac{1-\mathcal{L}^k}{1-{\mathcal{L}}}. 
\end{align*}
Further, from \eqref{eq:dup}--\eqref{eq:hlow}: 
$\Delta^d_{k-1}
\leq \hat{J}_1(\Delta^x_k+\Delta f^x_{k})+\hat{J}_2\Delta g^x_{k}+\hat{J}_1\Delta w+\hat{J}_2\Delta v$, where $\hat{J} \triangleq \begin{bmatrix} \hat{J}_1&  \hat{J}_2 \end{bmatrix} \triangleq |J|$. The rest of the proof is similar to the one for \cite[Lemma 2]{khajenejad2020simultaneous}.
\end{proof}
\vspace{-0.3cm}
\section{Illustrative Example} \label{sec:examples}\vspace{-0.05cm}
We consider a slightly modified version of a nonlinear system in \cite{de2019robust}, \yongs{without the uncertain matrices, with the inclusion of} unknown inputs, and with the following parameters (cf. \eqref{eq:system}):
$n=l=p=2$, $m=1$, $f(x_k)=\begin{bmatrix} f_1(x_k) & f_2(x_k)\end{bmatrix}^\top$,  $g(x_k)=\begin{bmatrix} g_1(x_k) & g_2(x_k)\end{bmatrix}^\top$, $B=D=0_{2 \times 1}$,  $G=\begin{bmatrix} 0 & -0.1 \\ 0.2 & -0.2 \end{bmatrix}$, $H=\begin{bmatrix} -0.1 & 0.3 \\ 0.25 & -0.75 \end{bmatrix}$, $\overline{v}=-\underline{v}=\overline{w}=-\underline{w}=\begin{bmatrix} 0.2 & 0.2 \end{bmatrix}^\top$, $\overline{x}_0=\begin{bmatrix} 2 & 1.1 \end{bmatrix}^\top$, $\underline{x}_0=\begin{bmatrix} -1.1 & -2 \end{bmatrix}^\top$ \yongs{with 
\begin{align*}
\begin{array}{rl}
f_1(x_k)&=0.6x_{1,k} - 0.12x_{2,k}+1.1 \sin(0.3x_{2,k}-.2x_{1,k}),\\ f_2(x_k)&=-0.2x_{1,k}-0.14x_{2,k}, \ g_2(x_k)=\sin(x_{1,k}),\\ g_1(x_k)&=0.2x_{1,k}+0.65x_{2,k}+0.8\sin(0.3x_{1,k}+0.2x_{2,k}), 
\end{array}
\end{align*}
while} the unknown input signals are depicted in Figure \ref{fig:variances2}. 

Note that rk$(H)=1 \yongs{\,< 2=p}$, thus the feedthrough matrix is not full rank and hence, the approach in \cite{khajenejad2020simultaneous} is not applicable. Moreover, applying \cite[Theorem 1]{singh2018mesh}, 
upper and lower bounds for partial derivatives of $f(\cdot)$ and $g(\cdot)$ \mohamm{are obtained as}: $(a^f_{11},a^f_{12},a^f_{21},a^f_{22})=(0.38, -0.52,-0.2-\epsilon,-0.14-\epsilon)$, $(b^f_{11},b^f_{12},b^f_{21},b^f_{22})=(0.82,0.21, -0.2+\epsilon, -0.14+\epsilon)$, $(a^g_{11}, a^g_{12}, a^g_{21}, a^g_{22})=( -0.04, 0.49, -1, -\epsilon)$ and $(b^g_{11},b^g_{12},b^g_{21},b^g_{22})=(0.44,0.81,1, \epsilon)$, where $\epsilon$ is a very small positive value, ensuring that the partial derivatives are in open intervals (cf. \cite[Theorem 1]{yang2019sufficient}). Moreover, $L_f=0.35$ and $L_g=0.74$ and Assumption \ref{assumption:mix-lip} holds by \cite[Theorem 1]{yang2019sufficient}). Furthermore, computing $K=\begin{bmatrix} K_1 & K_2 \end{bmatrix}=\begin{bmatrix} 0.0267 & 0 & 0.0666 & 0.1061 \\ 0.4177 & 2.1203 & 1.0817 & 2.0209 \end{bmatrix}$ and $L=\begin{bmatrix} L_1 & L_2 \end{bmatrix}=\begin{bmatrix} 0 & 0.1017 & 0 & 0 \\ 0.5194 & 1.1814 & 1.2787 & 1.9302\end{bmatrix}$, we obtain ${\rm rk}(I-K_1-L_1)={\rm rk}(I-K_1+L_1)=2$.
Therefore, by Theorem \ref{thm:existence}, the existence of correct framers is guaranteed, i.e., the true states and unknown inputs are within the estimate intervals. This, can be verified from Figure \ref{fig:variances2} that depicts interval estimates as well as the true states and unknown inputs. 
\begin{figure}[t]
\begin{center}
\includegraphics[scale=0.150,trim=48mm 0mm 5mm 10mm,clip]{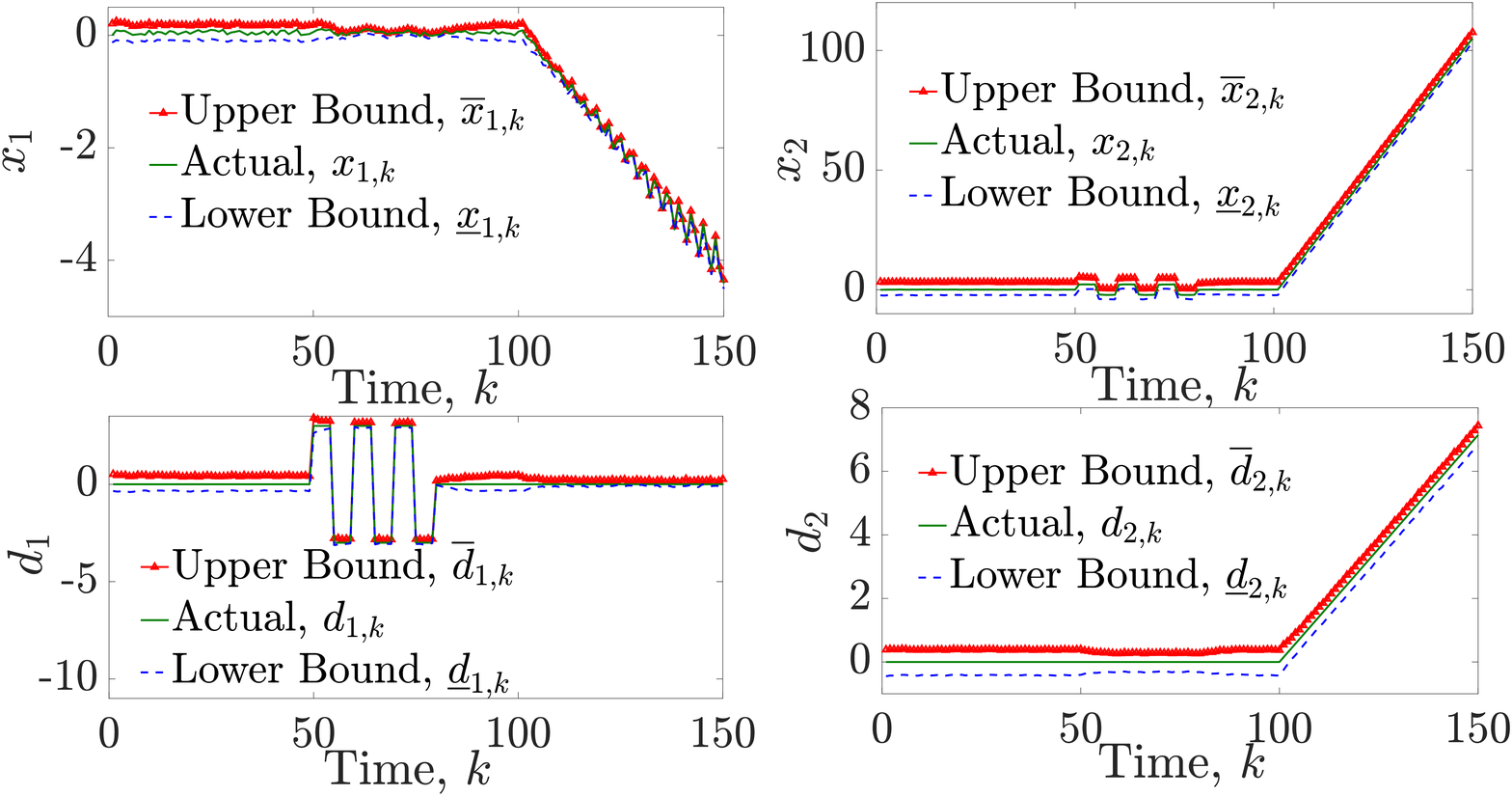}
\vspace{-0.4cm}
\caption{Actual states and inputs, $x_{1,k}$, $x_{2,k}$, $d_{1,k}$, $d_{2,k}$, as well as their estimated maximal and minimal values, $\overline{x}_{1,k}$, $\underline{x}_{1,k}$, $\overline{x}_{2,k}$, $\underline{x}_{1,k}$, $\overline{d}_{1,k}$, $\underline{d}_{1,k}$, $\overline{d}_{2,k}$, $\underline{d}_{2,k}$. \label{fig:variances2}}
\end{center}
\vspace{-0.1cm}
\end{figure} 
\begin{figure}[t]
\begin{center}
\includegraphics[scale=0.147,trim=38mm 10mm 10mm 20mm,clip]{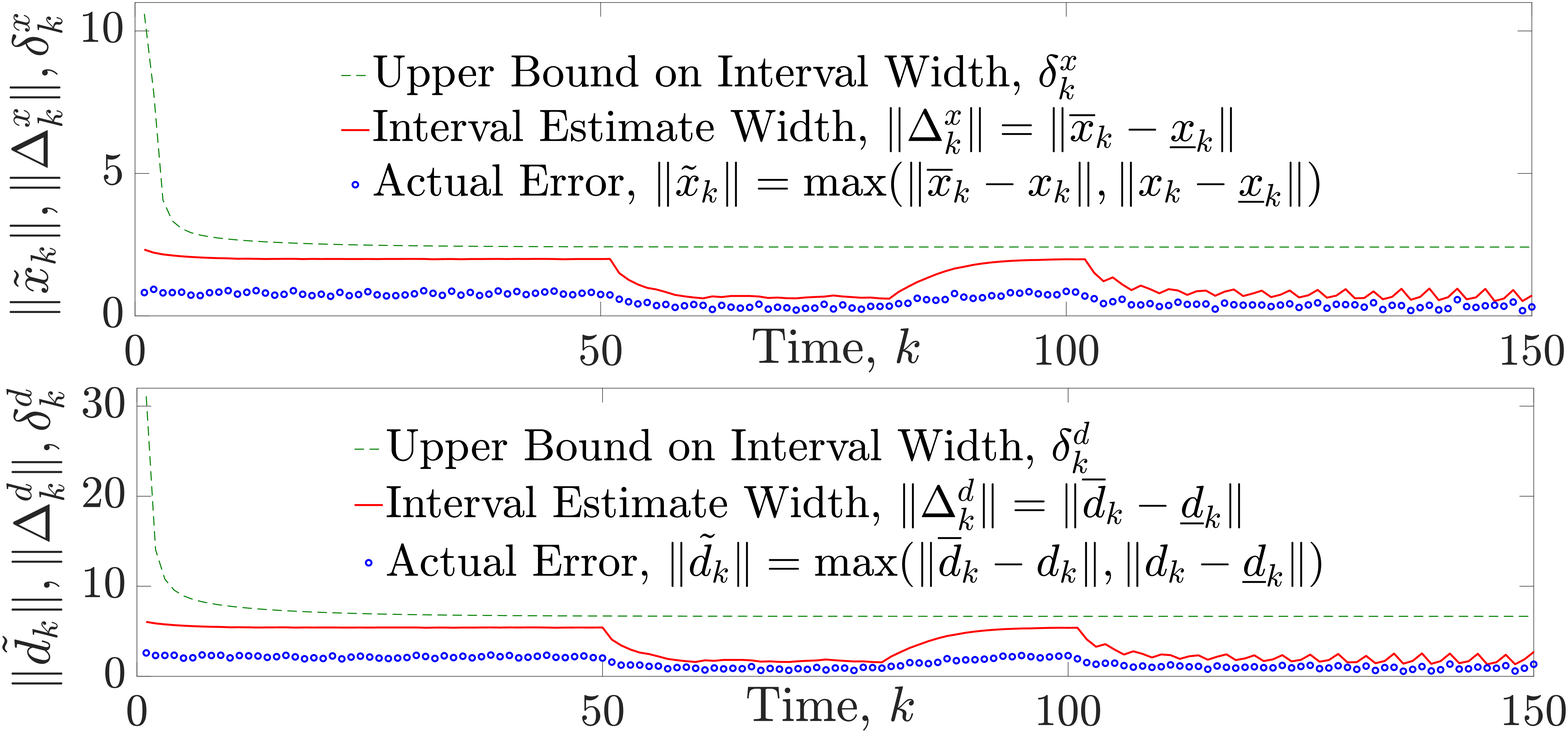}
\vspace{-0.5cm}
\caption{Estimation errors, estimate interval widths and their upper bounds for the interval-valued estimates of states, $\|\tilde{x}_{k|k}\|$, \hspace{-0.05cm}$\|\Delta ^x_k\|$, \hspace{-0.05cm}$\delta^x_k$, and unknown inputs, $\|\tilde{d}_{k}\|$, $\|\Delta ^d_k\|$, $\delta^d_k$. \label{fig:variances3}}
\end{center}
\vspace{-0.15cm}
\end{figure}
In addition, from \cite[(10)--(13)]{yang2019sufficient}), we obtain $C_f=\begin{bmatrix}  0.251 & 0 \\ 0.0029 & 0.201\end{bmatrix}$, $C_g=\begin{bmatrix} 0 & 0.225 \\ -.374 & -.045 \end{bmatrix}$, which implies that ${L}_{f_d}=0.852$ and $L_{g_d}=1.19$ by Lemma \ref{lem:lip-dec}. Consequently, ${\mathcal{L}}=0.643$ 
satisfies 
Condition \eqref{item:first} in Theorem \ref{thm:boundedness}. 
So, we expect to obtain uniformly bounded estimate errors with convergent upper bounds. Figure \ref{fig:variances3} \mohamm{illustrates this}, where at each step, the actual error is less than or equal to the interval width, which in turn is less than or equal to the predicted upper bound for the interval width and the upper bounds converge to some steady-state values. \yong{Note that, despite our best efforts, we were unable to find interval-valued observers \mohamm{for nonlinear systems} in the literature that simultaneously return state and unknown input estimates for comparison with our results.} 
\balance
\section{Conclusion} \label{sec:conclusion}
In this paper, a simultaneous  input and state interval-valued observer was proposed for bounded-error mixed monotone Lipschitz nonlinear systems with unknown inputs \yongz{and rank-deficient feedthrough}. We derived \mohamm{necessary and} sufficient conditions for the existence \mohamm{and correctness} of our observer \mohamm{as well as the tightness of} 
the input interval estimates. 
\moham{Further}, several conditions for the stability of the observer, i.e., the uniform boundedness of the interval widths were 
\mohamm{provided and} the effectiveness of the proposed approach \mohamm{was demonstrated} with an example. 
\mohamm{\yongz{Future work will seek} 
tighter decomposition functions 
and necessary conditions for the interval estimator 
stability. }
\vspace{-0.05cm}

{
\bibliographystyle{unsrturl}

\bibliography{biblio}

\begin{thebibliography}{10}

\bibitem{liu2011robust}
W.~Liu and I.~Hwang.
\newblock Robust estimation and fault detection and isolation algorithms for
  stochastic linear hybrid systems with unknown fault input.
\newblock {\em IET control theory \& applications}, 5(12):1353--1368, 2011.

\bibitem{yong2016tcps}
S.Z. Yong, M.~Zhu, and E.~Frazzoli.
\newblock Switching and data injection attacks on stochastic cyber-physical
  systems: {M}odeling, resilient estimation and attack mitigation.
\newblock {\em ACM Transactions on Cyber-Physical Systems}, 2(2):9, 2018.

\bibitem{yong2018simultaneous}
S.Z. Yong.
\newblock Simultaneous input and state set-valued observers with applications
  to attack-resilient estimation.
\newblock In {\em 2018 Annual American Control Conference (ACC)}, pages
  5167--5174. IEEE, 2018.

\bibitem{jaulin2002nonlinear}
L.~Jaulin.
\newblock Nonlinear bounded-error state estimation of continuous-time systems.
\newblock {\em Automatica}, 38(6):1079--1082, 2002.

\bibitem{kieffer2004guaranteed}
M.~Kieffer and E.~Walter.
\newblock Guaranteed nonlinear state estimator for cooperative systems.
\newblock {\em Numerical algorithms}, 37(1-4):187--198, 2004.

\bibitem{moisan2007near}
M.~Moisan, O.~Bernard, and J-L. Gouz{\'e}.
\newblock Near optimal interval observers bundle for uncertain bioreactors.
\newblock In {\em European Control Conference (ECC)}, pages 5115--5122. IEEE,
  2007.

\bibitem{raissi2010interval}
T.~Ra{\"\i}ssi, G.~Videau, and A.~Zolghadri.
\newblock Interval observer design for consistency checks of nonlinear
  continuous-time systems.
\newblock {\em Automatica}, 46(3):518--527, 2010.

\bibitem{raissi2011interval}
T.~Ra{\"\i}ssi, D.~Efimov, and A.~Zolghadri.
\newblock Interval state estimation for a class of nonlinear systems.
\newblock {\em IEEE Transactions on Automatic Control}, 57(1):260--265, 2011.

\bibitem{mazenc2011interval}
F.~Mazenc and O.~Bernard.
\newblock Interval observers for linear time-invariant systems with
  disturbances.
\newblock {\em Automatica}, 47(1):140--147, 2011.

\bibitem{mazenc2013robust}
F.~Mazenc, T-N. Dinh, and S-I. Niculescu.
\newblock Robust interval observers and stabilization design for discrete-time
  systems with input and output.
\newblock {\em Automatica}, 49(11):3490--3497, 2013.

\bibitem{wang2015interval}
Y.~Wang, D.M. Bevly, and R.~Rajamani.
\newblock Interval observer design for {LPV} systems with parametric
  uncertainty.
\newblock {\em Automatica}, 60:79--85, 2015.

\bibitem{efimov2013interval}
D.~Efimov, T.~Ra{\"\i}ssi, S.~Chebotarev, and A.~Zolghadri.
\newblock Interval state observer for nonlinear time varying systems.
\newblock {\em Automatica}, 49(1):200--205, 2013.

\bibitem{zheng2016design}
G.~Zheng, D.~Efimov, and W.~Perruquetti.
\newblock Design of interval observer for a class of uncertain unobservable
  nonlinear systems.
\newblock {\em Automatica}, 63:167--174, 2016.

\bibitem{ellero2019unknown}
N.~Ellero, D.~Gucik-Derigny, and D.~Henry.
\newblock An unknown input interval observer for {LPV} systems under
  ${L}_2$-gain and ${L}_{\infty}$-gain criteria.
\newblock {\em Automatica}, 103:294--301, 2019.

\bibitem{khajenejad2019simultaneous}
M.~Khajenejad and S.Z. Yong.
\newblock Simultaneous input and state set-valued
  $\mathcal{{H}}_{\infty}$-observers for linear parameter-varying systems.
\newblock In {\em American Control Conference (ACC)}, pages 4521--4526. IEEE,
  2019.

\bibitem{khajenejadasimultaneous}
M.~Khajenejad and S.Z. Yong.
\newblock Simultaneous mode, input and state set-valued observers with
  applications to resilient estimation against sparse attacks.
\newblock In {\em Conference on Decision and Control (CDC)}, 2019.

\bibitem{khajenejad2020nonlinear}
M.~Khajenejad and S.Z. Yong.
\newblock Simultaneous state and unknown input set-valued observers for
  nonlinear dynamical systems.
\newblock {\em arXiv preprint arXiv:2001.10125}, 2020. \mohamm{Submitted to
  Automatica, under review}.

\bibitem{khajenejad2020simultaneous}
M.~Khajenejad and S.Z. Yong.
\newblock Simultaneous input and state interval observers for nonlinear systems
  with full-rank direct feedthrough.
\newblock {\em arXiv preprint arXiv:2002.04761}, \mohamm{Accepted for CDC}
  2020.

\bibitem{yang2019sufficient}
L.~Yang, O.~Mickelin, and N.~Ozay.
\newblock On sufficient conditions for mixed monotonicity.
\newblock {\em IEEE Transactions on Automatic Control}, 64(12):5080--5085,
  2019.

\bibitem{coogan2015efficient}
S.~Coogan and M.~Arcak.
\newblock Efficient finite abstraction of mixed monotone systems.
\newblock In {\em Hybrid Systems: Computation and Control}, pages 58--67. ACM,
  2015.

\bibitem{singh2018mesh}
K.R. Singh, Q.~Shen, and S.Z. Yong.
\newblock Mesh-based affine abstraction of nonlinear systems with tighter
  bounds.
\newblock In {\em Conference on Decision and Control (CDC)}, pages 3056--3061.
  IEEE, 2018.

\bibitem{yang2019tight}
L.~Yang and N.~Ozay.
\newblock Tight decomposition functions for mixed monotonicity.
\newblock In {\em IEEE Conference on Decision and Control (CDC)}, pages
  5318--5322, 2019.

\bibitem{moisan2005interval}
M.~Moisan and O.~Bernard.
\newblock Interval observers for non monotone systems. application to
  bioprocess models.
\newblock {\em IFAC Proceedings Volumes}, 38(1):43--48, 2005.

\bibitem{mazenc2014interval}
F.~Mazenc, T.N. Dinh, and S.I. Niculescu.
\newblock Interval observers for discrete-time systems.
\newblock {\em International journal of robust and nonlinear control},
  24(17):2867--2890, 2014.

\bibitem{milanese1991optimal}
M.~Milanese and A.~Vicino.
\newblock Optimal estimation theory for dynamic systems with set membership
  uncertainty: An overview.
\newblock {\em Automatica}, 27(6):997--1009, 1991.

\bibitem{james1978generalised}
M.~James.
\newblock The generalised inverse.
\newblock {\em The Mathematical Gazette}, 62(420):109--114, 1978.

\bibitem{lu2002inverses}
T.T. Lu and S.H. Shiou.
\newblock Inverses of 2$\times$ 2 block matrices.
\newblock {\em Computers \& Mathematics with Applications}, 43(1-2):119--129,
  2002.

\bibitem{khalil2002nonlinear}
H.K. Khalil.
\newblock Nonlinear systems.
\newblock {\em Upper Saddle River}, 2002.

\bibitem{delshad2016robust}
S.S Delshad, A.~Johansson, M.~Darouach, and T.~Gustafsson.
\newblock Robust state estimation and unknown inputs reconstruction for a class
  of nonlinear systems:\hspace{-0.05cm} {M}ultiobjective approach.
\newblock {\em \hspace{-0.02cm}Automatica}, \hspace{-0.02cm}64:1--7, 2016.

\bibitem{de2019robust}
C.E. de~Souza.
\newblock Robust $\mathcal{H}_{\infty}$ filtering for a class of discrete-time
  lipschitz nonlinear systems.
\newblock {\em Automatica}, 103:69--80, 2019.

\end{thebibliography}
}
\end{document}